\documentclass[11pt]{article}

\usepackage{amsmath}
\usepackage{amssymb}
\usepackage[colorlinks=true,citecolor=blue]{hyperref}
\usepackage{amsthm}
\usepackage{algorithm}
\usepackage{algorithmic}
\usepackage{color}
\usepackage{soul}
\usepackage{graphicx}
\usepackage{paralist}
\usepackage{enumitem}
\usepackage{xspace}
\usepackage{wrapfig}

\def\denseformat{
\setlength{\textheight}{9in}
\setlength{\textwidth}{6.5in}
\setlength{\evensidemargin}{-0.2in}
\setlength{\oddsidemargin}{-0.2in}
\setlength{\headsep}{10pt}
\setlength{\topmargin}{-0.3in}
\setlength{\columnsep}{0.375in}
\setlength{\itemsep}{0pt}
}

\newtheorem{thm}{Theorem}[section]
\newtheorem{lem}[thm]{Lemma}

\newtheorem{prop}[thm]{Proposition}
\newtheorem{claim}{Claim}

\newtheorem{dfn}[thm]{Definition}

\newcommand{\dft}[1]{{\bfseries\boldmath\itshape{#1}}}

\newcommand{\ncdots}{\mkern-3mu\cdot\mkern-3mu\cdot\mkern-3mu\cdot\mkern-3mu}


\newcommand{\N}{\mathbf{N}}
\newcommand{\R}{\mathbf{R}}

\newcommand{\calA}{\mathcal{A}} 
\newcommand{\calB}{\mathcal{B}}
\newcommand{\calI}{\mathcal{I}}
\newcommand{\calP}{\mathcal{P}} 

\renewcommand{\phi}{\varphi}


\newcommand{\level}{\mathrm{lv}}

\DeclareMathOperator{\Path}{Path}

\renewcommand{\th}{{\textrm{-th}}}

\newcommand{\PTS}{\textsf{PTS}\xspace}
\newcommand{\PPTS}{\textsf{PPTS}\xspace}
\newcommand{\HPTS}{\textsf{HPTS}\xspace}
\newcommand{\FormPaths}{\textsf{FormPaths}}
\newcommand{\ActivatePreBad}{\textsf{ActivatePreBad}}

\newcommand{\abs}[1]{\left|#1\right|}
\newcommand{\floor}[1]{\left\lfloor#1\right\rfloor}
\newcommand{\ceil}[1]{\left\lceil#1\right\rceil}
\newcommand{\paren}[1]{\left(#1\right)}
\newcommand{\set}[1]{\left\{#1\right\}}
\newcommand{\Set}[1]{\set{#1}}
\newcommand{\sucht}{\,:\,}
\newcommand{\Downto}{\textbf{downto}\xspace}

\newcommand{\Seq}[1]{\left\langle#1\right\rangle}

\denseformat\renewcommand{\paragraph}[1]{\par\noindent\textbf{#1}}

\title{\bf With Great Speed Come Small Buffers:\\
  Space-Bandwidth Tradeoffs for Routing 
}
\author{Avery Miller\\ University of Manitoba
\and
Boaz Patt-Shamir\\ Tel Aviv University
\and
Will Rosenbaum\\Max Planck Institute for Informatics
}
\begin{document}
\maketitle
\begin{abstract}
  We consider the Adversarial Queuing Theory (AQT) model, where packet arrivals are subject to a maximum average rate $0\le\rho\le1$ and burstiness $\sigma\ge0$. In this model, we analyze the size of buffers required to avoid overflows in the basic case of a path.  Our main results characterize the space required by the average rate and the number of distinct destinations: we show that $O(k d^{1/k})$ space suffice, where $d$ is the number of distinct destinations and $k=\floor{1/\rho}$; and we show that $\Omega(\frac 1 k d^{1/k})$ space is necessary.  For directed trees, we describe an algorithm whose buffer space requirement is at most $1 + d' + \sigma$ where $d'$ is the maximum number of destinations on any root-leaf path.
\end{abstract}

\section{Introduction}

\paragraph{Background.}
Routing, be it physical (cars, trains, air traffic) or virtual (data
packets, voice call circuits, multicasts) has long been recognized as
a critical component of any system that allows for interaction and
communication. The introduction of packet
networks and store-and-forward routing as the basic technique of the
Internet (when it was still called
DARPANET~\cite{doi:10.1002/net.3230030202}) brought to the forefront
an  interesting parameter, namely the buffer space at nodes.

Buffers are, in some sense, the universal glue that connects different
components of a system smoothly. Intuitively, buffers allow each part
of a system to run at its own pace. This is necessary even in tightly
synchronized systems, because local processing speed depends on local
resources and local load, which are rarely uniform across the system.
Obviously, buffers cannot make up for prolonged gaps between the rates
of demand and capacity, but they help in overcoming short bursts by
smoothing the demand over time. This partial decoupling of input and
output makes buffer control and analysis a tricky issue.
Andrews et al.~\cite{DBLP:journals/siamcomp/AndrewsFHLZ00}
give strong results regarding the required buffer space and maximal
latency, but assuming that all packets are available at start, and 
ignoring initial storage. 

Typically, buffers are simple and easy to use so long as the load is
light. However, problems arise once loads increase and buffers are
filled: packet discards due to full buffers are usually viewed as a
failure.\footnote{TCP, the Internet's prevalent transport protocol, is
  an exception in this respect. TCP in fact forces periodic packet
  drops, as a part of its ongoing quest to grab free bandwidth. Since
  no explicit information about utilization is assumed, the protocol
  keeps increasing demand until an indication of a packet drop is
  received.} 
From the theoretical viewpoint 
(possibly because packet
drops greatly complicate stochastic analysis \`a la queuing-theory),
the issue of dropped packets was largely ignored by analytic studies
in the Theory of Networks. The introduction of competitive analysis
allowed for investigating the issue from the throughput
viewpoint~\cite{doi:10.1137/S0097539701399666}, gaining new insights
into the problem (see, e.g., the survey~\cite{Goldwasser:2010}).  But
by and large, the positive results concerned a single device (a switch
or a router), and results for even slightly more complex systems were
mostly negative~\cite{DBLP:conf/spaa/KesselmanMLP03}, in the sense
that in the examined cases, the competitive ratio of any algorithm
could be unacceptably large for practical applications.

A different approach was advocated by Adversarial Queuing Theory
(AQT), introduced in~\cite{Borodin2001}. AQT considers the scenario in
which each packet has a predetermined route, and forwarding is
\emph{greedy}. That is, if there is a packet buffered for a link $e$
at time $t$, then some packet will be forwarded over $e$ at time
$t$. A \emph{scheduling policy} is a rule which selects {which}
packet each buffer forwards each round. Classical AQT addresses the 
following
qualitative question: Which scheduling policies ensure that all buffer
occupancies are bounded irrespective of the duration of execution?

To have a meaningful answer for the above question,  one must
somehow restrict the demand presented by new packets. The
criterion used by AQT (adopting ideas of Cruz~\cite{Cruz1991})
is, roughly, that the set of routes associated with packets that
appear together (``injected by the adversary'') at any step is
edge-disjoint. This ensures that the raw bandwidth required by packets
does not exceed the available bandwidth. (In fact, overlap is 
allowed, but not more than some $\sigma\ge0$; a second parameter, 
$0\le\rho\le1$ bounds the maximum rate in which any 
edge is requested. See Def.~\ref{dfn:sigma-rho}.)

AQT was fruitful in identifying strengths and weaknesses of different
(greedy) scheduling policies (e.g.,~\cite{DBLP:journals/siamcomp/BhattacharjeeGL04}). However, the theory
was limited in two important respects. First,
restricting forwarding policies to be greedy turned out to be  a
serious handicap in terms of
efficiency~\cite{Aiello2003}. Secondly, the qualitative-only
view of the required buffer space (either bounded or infinite) is
quite coarse. For example, some protocols may be stable, but with space
requirement which is exponential in the number of nodes. Notable
exceptions to this state of affairs are the work of Adler and
Ros\'en on DAGs~\cite{Adler2002}
and the work by Ros\'en and
Scalosub on lines~\cite{DBLP:journals/talg/RosenS11}. 
The relative
simplicity of the topologies in these studies is not coincidental. The
difficulty of understanding maximum buffer size \emph{quantitatively}
drove research to examine simple cases first: Even the case of a line 
turned out to be fairly non-trivial (see 
\cite{even_et_al:LIPIcs:2016:6352} for a recent result).

\paragraph{Recent progress: Single-destination trees.}
Recently, several works have addressed the problem of quantifying the
required buffer size under a $(\rho,\sigma)$-restricted packet
injections, but without the requirement that the forwarding protocol
must be greedy, and with an eye on \emph{locality}. In~\cite{Miller2016}, it was
shown that in the case of a single-destination tree, a centralized
algorithm can route any instance using $O(\rho+\sigma)$ buffer space
at every node. In~\cite{DBLP:conf/spaa/DobrevLNO17}
and~\cite{DBLP:conf/podc/Patt-ShamirR17} it is shown that
$\Theta(\rho\log n+\sigma)$ buffer space are necessary and sufficient for
protocols with constant locality. That result was later extended
in~\cite{PR-19} to show that $\Theta(\rho\ceil{\frac{\log n}r}+\sigma)$ is
necessary and sufficient for protocols with locality $r$. An interesting
consequence of this result is that $O(\log n)$ locality is sufficient
to perform as well as the best (offline)
strategy. 

\paragraph{Our results: Multiple destinations.}
In~\cite{DBLP:conf/podc/Patt-ShamirR17} it was shown that on a 
line, $\Omega(d)$ space is necessary to avoid overflows if there are 
$d$ distinct destinations, and if the average injection rate satisfies 
$\rho>1/2$. In this paper we show that the latter condition was no
accident: On one hand, we show that
if the average rate satisfies 
$\frac1{\ell+1}<\rho\le\frac{1}{\ell}$ for some integer $\ell$, then 
there is a centralized online algorithm which routes all packets using  
buffer space $O(\ell d^{1/\ell}+\sigma)$; and on the other hand, we 
prove a 
lower bound of  $\Omega(\frac 1 \ell d^{1/\ell}+\sigma)$ buffer space required
by any (offline) algorithm, for any integer $\ell>1/\rho$.
In particular, this result means that if $\rho\le\frac1{\log d}$, 
$O(\log d)$ buffer space suffices. 
In addition, for the directed tree topology, we prove that buffers of 
size $1 + d' + 
\sigma$ 
are sufficient, where $d'$ is an upper bound on the number of  
destinations along any leaf-root path. 

\paragraph{Implications and open problems.}
One way to interpret our positive result is the following: Suppose
that in a given line system, some buffer space suffices to avoid packet
drops. If the system is modified so that the number of destinations is
increased by a factor $\alpha>1$ without changing the overall offered
load per link, overflows can be avoided by either increasing the
buffer sizes by factor $\alpha$, or by increasing both buffer space
and link bandwidths by an $O(\log\alpha)$ factor. This result corresponds
nicely to previous results such as online virtual circuit admission
control~\cite{DBLP:conf/focs/AwerbuchAP93,DBLP:journals/fttcs/BuchbinderN09}
exhibiting $O(\log n)$ competitiveness when demands are at most
$O(1/\log n)$ of the capacity. 

Great challenges remain open on our way to fully understand the
general case. These include finding a decentralized (local) algorithm
for the multi-destination case, and extending the restricted topologies we consider
to general topologies. Regarding the former, we expect the
OED algorithm~\cite{DBLP:conf/spaa/DobrevLNO17,
  DBLP:conf/podc/Patt-ShamirR17} to be useful as it was for algorithms
of parameterized locality~\cite{PR-19}. Regarding the latter, we note that
the case of a union of trees is also important, due to the
fact that this topology is the output of many routing algorithms.

 \paragraph{Paper organization.} We present the algorithm gradually,
 from the simplest to the most general case: In Sec.~\ref{sec:basic-alg}
 we present algorithms for any utilization parameter $\rho$, and also 
 extend them to directed trees. The algorithm is then
 generalized in Sec.~\ref{sec:hierarchical-alg} to be hierarchical, 
 with $\floor{\frac1\rho}$ hierarchy levels. The lower bound is 
 presented in 
 Sec.~\ref{sec:lb}.  
 Sec.~\ref{sec:preliminaries} formalizes the model and defines some
 concepts and notation. Due to lack of space, many proofs appear 
 in an appendix.

\section{Preliminaries}
\label{sec:preliminaries}
Given a natural number $n$, we use the notation $\Seq 
n=\Set{0,\ldots,n-1}$.
We model a network as a directed graph $G = (V, E)$, where 
packets flow along the direction of edges.
For the most part, in this paper we restrict attention to the case 
where $G$ is a path, and take $V=\Seq{n}$, $E = \set{(i, i+1) \sucht 
0\le i <n-1}$. 

An execution of an algorithm proceeds in synchronous rounds. Each round 
consists of two steps: an \dft{injection step}, during which new 
packets arrive, and a \dft{forwarding step}, during which the algorithm 
performs computations and forwards packets. In this paper we assume 
that in each round, at most one packet can be forwarded over each link. 
When we refer to a state at 
round $t$, 
we compute the state \emph{after} the 
injection step, but \emph{before} the forwarding step. We denote the 
time in round $t$ immediately after the forwarding step by $t+$. 
E.g., 
$L^t(v)$ denotes the state of the buffer at $v$ 
after packets are injected and 
before forwarding; and $L^{t+}(v)$ denotes the state after forwarding, 
but 
before round $t+1$ packets arrive.


A node may partition its buffer into several parts, which we refer to 
as \dft{pseudo-buffers}. We use the notation $L_{k}^t(v)$ and $L_{j, 
k}^t(v)$ to denote pseudo-buffers, where $j$ and $k$ are parameters 
described below. 
We make no assumptions about the priority assigned to packets within a 
single (pseudo)-buffer, but for concreteness it is convenient to assume 
that all (pseudo)-buffers use Last-In, First-Out (LIFO) priority.

Throughout an execution, packets arrive in the network, controlled by 
an 
adversary. 
An \dft{adversary} or \dft{injection pattern} is a set $A$ of packets. 
A packet $P$ is represented by a triple $P = (t, i_P, w_P)$ where $t 
\in \N$ is injection round of $P$, $i_P\in V$ is $P$'s injection site,  
and $w_P\in V$ is 
$P$'s destination. We denote the unique path from $i_p$ to $w_P$
by $\Path(i_p, w_P)$ . 

We say that $P$'s path \dft{contains} a buffer $v$ if $v \in \Path(i_P, 
w_P)$. For an adversary $A$, a set of rounds $T$, and a buffer $v$, we 
denote the number of packets injected during $T$ whose trajectories 
contain $v$ by $N_T(v)$. That is,
\(
N_T(v) = \abs{\set{(t, i_P, w_P) \in A \sucht v \in \Path(i_P, w_P) \text{ and } t \in T}}.
\)

\begin{dfn}
	\label{dfn:sigma-rho}
  For any $\rho \in \R^+$ and $\sigma \in \N$ and adversary $A$, we say 
  that $A$ is \dft{$(\rho, \sigma)$-bounded} if for all buffers $v$ and 
  intervals of time $I$, we have
  \(
  N_T(v) \leq \rho \abs{I} + \sigma.
  \)
\end{dfn}

Intuitively, an adversary $A$ is $(\rho, \sigma)$-bounded if the 
average rate of packets needing to cross any particular 
edge $i$ is at most $\rho$. This average may not be exceeded by 
more than $\sigma$ packets when taken over any contiguous interval of 
time. The term 
$\sigma$ thus bounds the ``burstiness'' of $A$. 

The term ``excess'' gives a measure of how much of the 
adversary's burst budget has been expended. 

\begin{dfn}
  \label{dfn:excess}
  Fix an adversary $A$, rate $\rho$, round $t$, and buffer $v \in V$. The \dft{excess} of $v$ at time $t$, denoted $\xi^t(v)$, is 
  given by
  \begin{equation}
    \label{eqn:excess}
    \xi^t(v) = \max_{s \leq t} (\set{N_{[s, t]}(v) - \rho (t - s + 1)} \cup \set{0}).
  \end{equation}
\end{dfn}

The following lemma, which applies to graphs of any topology,
gives basic properties of excess.
The proof appears in Appendix \ref{app:preliminaries}.

\begin{lem}
  \label{lem:excess}
  Suppose $A$ is $(\rho, \sigma)$-bounded. Then for all rounds $t$ and buffers $v \in V$, we have
  \begin{inparaenum}[(1)]
  \item $\xi^t(v) \leq \sigma$, and
  \item $N_{\{t\}}(v) \leq \xi^t(v) - \xi^{t-1}(v) + \rho$.
  \end{inparaenum}
\end{lem}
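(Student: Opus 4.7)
For part (1), the plan is to unpack the definition directly. Each term in the max defining $\xi^t(v)$ is either $0$ (which is $\leq \sigma$ since $\sigma \geq 0$) or of the form $N_{[s,t]}(v) - \rho(t-s+1)$ for some $s \leq t$. Using the interval $I = [s,t]$ of length $t-s+1$ in Definition~\ref{dfn:sigma-rho}, the $(\rho,\sigma)$-boundedness of $A$ immediately yields $N_{[s,t]}(v) - \rho(t-s+1) \leq \sigma$, so the maximum is at most $\sigma$.

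For part (2), the plan is to exhibit, from the witness of $\xi^{t-1}(v)$, a single starting round $s^*$ that serves as a witness for $\xi^t(v)$ and yields the desired inequality. Concretely, I would split into two cases. If $\xi^{t-1}(v) > 0$, pick $s^* \leq t-1$ achieving $\xi^{t-1}(v) = N_{[s^*, t-1]}(v) - \rho(t - s^*)$. Since $N_{[s^*, t]}(v) = N_{[s^*, t-1]}(v) + N_{\{t\}}(v)$ and $\rho(t - s^* + 1) = \rho(t - s^*) + \rho$, the definition of $\xi^t(v)$ gives
\[
\xi^t(v) \geq N_{[s^*, t]}(v) - \rho(t - s^* + 1) = \xi^{t-1}(v) + N_{\{t\}}(v) - \rho,
\]
which is exactly the claimed bound. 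If instead $\xi^{t-1}(v) = 0$, then taking $s^* = t$ gives $\xi^t(v) \geq N_{\{t\}}(v) - \rho$; combined with $\xi^t(v) \geq 0 = \xi^{t-1}(v)$, we get $\xi^t(v) \geq \xi^{t-1}(v) + N_{\{t\}}(v) - \rho$ in either sub-case (depending on the sign of $N_{\{t\}}(v) - \rho$).

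Neither step looks like a genuine obstacle: the entire proof is really a bookkeeping exercise on the definition of excess, with the only mild subtlety being the need to handle the ``$\{0\}$ branch'' of the max in part (2) when $\xi^{t-1}(v) = 0$. Once that case is identified, the telescoping identity $N_{[s^*, t]}(v) = N_{[s^*, t-1]}(v) + N_{\{t\}}(v)$ does all of the work.
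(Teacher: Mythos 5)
Your proof is correct and matches the paper's approach: part~(1) is an immediate application of Definition~\ref{dfn:sigma-rho} to each term in the max, and part~(2) follows by feeding an optimal witness $s^*$ for $\xi^{t-1}(v)$ (or $s^*=t$ when $\xi^{t-1}(v)=0$) into the definition of $\xi^t(v)$ and using $N_{[s^*,t]}(v)=N_{[s^*,t-1]}(v)+N_{\{t\}}(v)$. The only (harmless) wrinkle is that your Case~2 sub-case split on the sign of $N_{\{t\}}(v)-\rho$ is unnecessary: taking $s^*=t$ alone already gives $\xi^t(v)\ge N_{\{t\}}(v)-\rho = \xi^{t-1}(v)+N_{\{t\}}(v)-\rho$.
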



In what follows, we sometimes consider algorithms that only accept 
packets every few rounds. Specifically, for some fixed $\ell$ and 
every $k \in \N$, the algorithm treats all packets injected at times $t 
= (k - 1) \ell + 1, (k - 1) \ell + 2, \ldots, k \ell$ as being injected 
at time $k \ell$. The algorithm then performs a single forwarding step 
at time $k  \ell$,  waits until $(k + 1) \ell$ to perform its next 
forwarding step, etc. 

\begin{dfn}
  \label{dfn:ell-reduction}
  Let $A$ be an adversary and $\ell$ a positive integer. The \dft{$\ell$-reduction} of $A$, denoted $A_\ell$ is the injection pattern defined by
  \(
  A_{\ell} = \set{\left(\floor{\frac{t - 1}\ell} + 1, i_P , w_P\right) 
  \sucht (t, i_P, 
  w_P) \in A}
  \).
\end{dfn}

The following lemma (proven in the appendix) characterizes $A_\ell$ using the parameters of $A$.

\begin{lem}
  \label{lem:ell-reduction}
  Suppose $A$ is $(\rho, \sigma)$ bounded and $\ell$ a positive integer. Then $A_\ell$ is $(\ell \cdot \rho, \sigma)$ bounded.
\end{lem}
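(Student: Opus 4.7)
The plan is to reduce the $(\ell\rho,\sigma)$-bound on $A_\ell$ directly to the $(\rho,\sigma)$-bound on $A$ by identifying, for every interval $I$ in the ``slowed-down'' timeline of $A_\ell$, a corresponding interval $I'$ in the original timeline that accounts for exactly the same packets. First I would fix an arbitrary buffer $v$ and an arbitrary interval of rounds $I = [a,b]$ in the reduced timeline. By the definition of $A_\ell$, a packet $(t, i_P, w_P) \in A$ becomes a packet of $A_\ell$ injected at round $\lfloor (t-1)/\ell \rfloor + 1$. Thus a packet of $A_\ell$ with injection round in $[a, b]$ corresponds to a packet of $A$ with injection round $t$ satisfying $a \le \lfloor (t-1)/\ell \rfloor + 1 \le b$, which unwinds to $(a-1)\ell + 1 \le t \le b\ell$.

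Next I would set $I' = [(a-1)\ell + 1,\, b\ell]$, an interval in the original timeline of length $\abs{I'} = b\ell - (a-1)\ell = (b-a+1)\ell = \ell\abs{I}$. The correspondence just observed gives the identity $N^{A_\ell}_I(v) = N^A_{I'}(v)$, since the trajectories (injection site and destination) are preserved by the $\ell$-reduction, and hence the condition that the trajectory contains $v$ is the same for a packet of $A$ as for its image in $A_\ell$.

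Finally, since $A$ is $(\rho,\sigma)$-bounded, I would apply Definition~\ref{dfn:sigma-rho} to the interval $I'$ to obtain
\[
N^{A_\ell}_I(v) \;=\; N^A_{I'}(v) \;\le\; \rho \abs{I'} + \sigma \;=\; \rho\ell\abs{I} + \sigma \;=\; (\ell\rho)\abs{I} + \sigma,
\]
which is exactly the $(\ell\rho,\sigma)$-bound required of $A_\ell$. Since $v$ and $I$ were arbitrary, this completes the proof. I do not anticipate any real obstacle here; the only thing to be careful about is the off-by-one in translating $t \mapsto \lfloor (t-1)/\ell \rfloor + 1$ so that rounds $1,\ldots,\ell$ map to reduced round $1$, rounds $\ell+1,\ldots,2\ell$ map to reduced round $2$, and so on. Note that the burstiness parameter $\sigma$ is genuinely preserved (not scaled) because the length of the corresponding original interval scales by exactly the same factor $\ell$ as the rate, leaving the additive slack untouched.
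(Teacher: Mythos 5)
Your proof is correct and is the natural argument for this lemma; I have no reason to think the paper does anything different. You identify the right correspondence between intervals of the reduced and original timelines, verify the length scaling $|I'| = \ell|I|$ and the preservation of trajectories, and correctly observe that the additive slack $\sigma$ survives unscaled precisely because only the interval length, not the bound's additive term, is multiplied by $\ell$. The floor-arithmetic unwinding $a \le \lfloor (t-1)/\ell\rfloor + 1 \le b \iff (a-1)\ell + 1 \le t \le b\ell$ is also correct.
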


\section{The Basic Algorithms}
\label{sec:basic-alg}
In this section, we describe a simple algorithm that requires buffers 
of size $O(d + \sigma)$ for all rates $\rho \leq 1$ and $d$ 
destinations. By Thm.~\ref{thmLB}, this space requirement is optimal 
for rates $\rho$ satisfying $1/2 < \rho \leq 1$. We break the 
description into two parts.  We first consider the case where all 
packets have a common destination: Algorithm ``Peak-to-Sink'' (\PTS) 
solves it using the best possible buffer space of $O(1 + \sigma)$ 
packets.  We then reduce the case of multiple destinations to single 
destinations in Algorithm ``Parallel Peak-to-Sink'' (\PPTS).
Finally we extend the result to directed trees.

\emph{Implementation convention.}
In order to determine which buffers forward packets in round $t$, a 
buffer $L^t(v)$ can be \dft{active} or \dft{inactive}. In each round, 
all buffers are initially inactive. During the forwarding step, our 
algorithms select buffers to activate, then all active buffers 
simultaneously forward. We denote the family of activated buffers by 
$\calA$.

\subsection{Peak to Sink (PTS) forwarding}
\label{sec:p2s}

\begin{wrapfigure}[9]{r}{.45\textwidth}
  \begin{minipage}{.4\textwidth}\vspace*{-8mm}
    \begin{algorithm}[H]\small
      \caption{$\PTS(w)$}
      \label{alg:peak-to-sink}				
      \begin{algorithmic}[1]
        \FORALL{rounds $t$}
        \STATE $i \leftarrow $ left-most buffer with 
        $\abs{L^t(i)} \geq 2$
        \STATE $\calA \leftarrow [i, w-1]$
        \STATE forward all buffers in $\calA$
        \ENDFOR
      \end{algorithmic}
    \end{algorithm}
  \end{minipage}
\end{wrapfigure}

Assume first that all packets $P \in A$ have a common destination $w$, 
which we take to be $w = n - 1$ without loss of generality. Consider a 
configuration $L^t$. We say that a buffer $i \in \Seq{n}$ is \dft{bad} 
if $\abs{L^t(i)} \geq 2$. $\PTS$ (Alg.~\ref{alg:peak-to-sink}) is 
simple: if $i$ is the left-most bad buffer, then all non-empty buffers 
$i' \geq i$ forward.
$\PTS$ requires the minimal possible buffer space, as we claim next 
(see Appendix \ref{app:basic-alg} for a proof).

\begin{prop}
  \label{prop:pts}
  Suppose $A$ is a $(\rho, \sigma)$-bounded adversary with $\rho \leq 1$ and all packets $P \in A$ have destination $w$. Then the maximum buffer size is at most $2 + \sigma$.
\end{prop}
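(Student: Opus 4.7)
The plan is to show $|L^t(v)| \leq 2 + \sigma$ for every buffer $v$ and round $t$. The key structural observation about $\PTS$ is that whenever $|L^\tau(v)| \geq 2$, the leftmost bad buffer in round $\tau$ is at most $v$, so $v$ lies in the activated set $\calA = [i, w-1]$ and $v$ forwards at round $\tau$; in other words, a bad buffer always serves itself.

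Fix $v$ and $t$ and assume $|L^t(v)| \geq 2$ (otherwise the bound is immediate). Let $t^\star = \max\{\tau < t : |L^\tau(v)| \leq 1\}$, which exists because $|L^0(v)| = 0$. For every $\tau \in (t^\star, t]$, $v$ is bad and therefore forwards, producing $t - t^\star - 1$ outgoing forwards over the rounds $(t^\star, t-1]$. A brief case analysis -- using that the activated set at round $t^\star$ is a suffix $[i, w-1]$ of the line, so $v \notin \calA$ forces $v-1 \notin \calA$ as well and rules out any cascade into $v$ at $t^\star$ -- yields $|L^{t^\star +}(v)| \leq 1$. The flow equation then gives
\[
|L^t(v)| \;=\; |L^{t^\star +}(v)| + I + G - (t - t^\star - 1) \;\leq\; 1 + I + G - (t - t^\star - 1),
\]
where $I$ is the number of packets injected into $v$ during $(t^\star, t]$ and $G \leq t - t^\star - 1$ is the number of forwards from $v-1$ into $v$ during $(t^\star, t-1]$.

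The remaining step is to bound $I + G - (t - t^\star - 1) \leq \sigma + 1$. Every packet contributing to $I$ or $G$ has a path through $v$ and thus sits in some $N$-count for $v$. I would split $G$ into (a) packets already at some position $< v$ at time $t^\star$ and (b) packets injected at positions $< v$ during $(t^\star, t]$. Packets of type (a) can be paired with the simultaneous forward out of $v$ in the round they cascade in, charging each one against a distinct unit of $t - t^\star - 1$. Packets of type (b) together with those in $I$ are all injections at positions $\leq v$ inside the window, hence bounded by $N_{(t^\star, t]}(v) \leq \rho (t - t^\star) + \sigma$ via $(\rho, \sigma)$-boundedness. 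With $\rho \leq 1$ we have $\rho (t - t^\star) \leq (t - t^\star - 1) + 1$, yielding $I + G - (t - t^\star - 1) \leq \sigma + 1$ and completing the proof.

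The main obstacle I anticipate is making the pairing for type-(a) packets rigorous: upstream packets do not consume any $N_{(t^\star, t]}(v)$ budget but still flow into $v$, so one must argue round-by-round that each such packet can be matched with a distinct outgoing forward from $v$. The edge cases $v = 0$ (no upstream, so $G = 0$) and $t^\star = t - 1$ (no forwards in the window) both reduce to a direct single-queue calculation; the inductive cascade argument for $v \geq 1$ with $t - t^\star \geq 2$ is where the care lies.
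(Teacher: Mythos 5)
Your overall structure (flow balance over a window in which $v$ is continuously bad) differs from the paper's: the paper proves an invariant on the cumulative prefix ``badness'' $B^t(v) = \sum_{v' \leq v} \max\set{|L^t(v')| - 1, 0}$, showing via the excess function $\xi^t(v)$ that $B^t(v) \leq \xi^t(v) + 1 \leq \sigma + 1$ (forwarding decreases $B(v)$ by one whenever it is positive, because $\PTS$ activates a suffix starting at the leftmost bad buffer; injections increase $B(v)$ by at most $\xi^t(v) - \xi^{t-1}(v) + \rho$), and then reads off $|L^t(v)| \leq 1 + B^t(v) \leq 2 + \sigma$.

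The gap in your version is exactly where you flag it, and it is fatal as written. In the flow equation
\[
|L^t(v)| \;\leq\; 1 + I + G_a + G_b - (t - t^\star - 1),
\]
the term $(t - t^\star - 1)$ already counts \emph{every} forward out of $v$ during the window. Observing that each type-(a) cascade-in coincides with a forward-out (true, since $v-1 \in \calA = [i,w-1]$ and $v\le w-1$ force $v \in \calA$) does not license subtracting $G_a$ a second time; those forward-outs are not spare capacity, they are already in the $-(t - t^\star - 1)$. What remains is $|L^t(v)| \leq 1 + (\rho(t-t^\star) + \sigma) + G_a - (t-t^\star-1) \leq 2 + \sigma + G_a$, and $G_a$ can genuinely be positive: a bad buffer at $v-1$ at time $t^\star$ drains into $v$ round after round while $v$ is simultaneously being injected directly, and those upstream packets were injected before $t^\star$, so they never appear in $N_{(t^\star, t]}(v)$. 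The $(\rho,\sigma)$-bound does constrain how many such packets can be upstream, but that constraint is invisible through the window $(t^\star, t]$ alone. Repairing this essentially forces you to enlarge the window and track bad packets upstream of $v$ over the whole prefix $[0,v]$ --- for instance, defining $t^\star$ as the last time $B^{t^\star}(v) = 0$ rather than the last time $|L^{t^\star}(v)| \leq 1$ --- at which point you have rediscovered the paper's badness potential.
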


\subsection{Parallel Peak to Sink (PPTS) forwarding}
\label{sec:pp2s}

Assume now that there are $d$ destinations. Let $W = \set{w_0, w_1, 
\ldots, w_{d-1}}$ denote the set of destinations, where $w_0 < w_1 < 
\cdots < w_{d-1}$. The idea of the algorithm is for each $w \in W$ to 
treat all packets with destination $w$ as an instance of the single 
destination algorithm. Capacity constraints preclude us from performing 
all $d$ instances in parallel every round, so $\PPTS$ chooses an 
appropriate maximal set of (pseudo)-buffers that can simultaneously 
forward.

To describe $\PPTS$ formally, we view each buffer $i$ as a collection 
of $d$ pseudo-buffers, where the $k\th$ pseudo-buffer in $i$ stores 
packets destined for $w_k$.\footnote{This is known in practice as 
``virtual output queuing'' (e.g., 
\cite{DBLP:journals/micro/McKeownIMEH97}).  } We denote the contents of 
$i$'s $k\th$ pseudo-buffer by
\[
L_k^t(i) = \set{P \in L^t(i) \text{ and } P \text{ has destination } 
w_k}.
\]
For any destination $w_k$, we refer to pseudo-buffers $L_k(i)$ as \dft{$k$-pseudo-buffers}. We say that a pseudo-buffer $L_k^t(i)$ is \dft{bad} at time $t$ if $\abs{L_k^t(i)} \geq 2$. In this case we call $L^t(i)$ \dft{bad for $k$}. We call a packet $P$ in a $k$-pseudo-buffer \dft{$k$-bad} if it is stored in a position $p \geq 2$ in $L_k^t(i)$.

$\PPTS$ (Alg.~\ref{alg:pp2s}) works as follows. Starting from the 
right-most destination, $w_{d-1}$, the algorithm searches for the 
left-most 
bad $(d-1)$-pseudo-buffer, if any. If a bad $(d-1)$-pseudo-buffer 
$L_{d-1}^t(i_{d-1})$ 
is found, all $(d-1)$-pseudo-buffers in the range $[i_{d-1}, w_{d-1} - 
1]$ are 
activated. The algorithm then searches for bad $d-2$ pseudo-buffers. 
Suppose $L_{d-2}^t (i_{d-2})$ is the left-most $(d-2)$-pseudo-buffer. 
If $i_{d-2} < i_{d-1}$, then all $(d-2)$-pseudo-buffers in the interval 
$[i_{d-2}, i_{d-1} - 1]$ are activated; otherwise, no $(d-2)$-pseudo-buffers 
are 
activated. This process is continued for destinations $w_{d-3}, 
w_{d-4}, \ldots, w_0$, where an interval of $k$-pseudo-buffers is 
activated only if a $k$-bad pseudo-buffer is encountered to the left of 
all pseudo-buffers activated thus far. By construction, the intervals 
activated for each destination are disjoint, so forwarding can be 
implemented without violating capacity constraints.

\begin{wrapfigure}{r}{.47\textwidth}
	\begin{minipage}{.47\textwidth}\vspace{-6mm}
\begin{algorithm}[H]\small
  \caption{$\PPTS(I, W)$  (do each round)}
  \label{alg:pp2s}
  \begin{algorithmic}[1]
    \STATE $W = \set{w_0 < w_1 < \cdots < w_{d-1}}$
    \hfill\COMMENT{destinations}
    \STATE $i \leftarrow w_d$, $\calA \leftarrow \varnothing$
    \FOR{$k \gets d-1$ \Downto $0$}
    \IF{there exists $i' < i$ such that $i'$ is bad for $w_k$}
    \STATE $i_k \leftarrow \min\set{i' \sucht L_k(i') \text{ is bad}}$ 
    \STATE $\calA \leftarrow \calA \cup \set{L_k(i) \sucht i \in [i_k, i-1]}$
    \STATE $i \leftarrow i_k$
    \ENDIF
    \ENDFOR
    \STATE each nonempty $L_k(i) \in \calA$ forwards
  \end{algorithmic}
\end{algorithm}
\end{minipage}\end{wrapfigure}

We note that algorithm $\PPTS$ need not be told the set of destinations 
$W$ to which an adversary $A$ injects in advance. Instead, we can 
assume that all nodes $i \in [1, n-1]$ are potential destinations, 
giving rise to $n$ pseudo-buffers per node. However, if an adversary 
only injects packets with $d$ distinct destinations, then at most $d$ 
pseudo-buffers in any given buffer will be non-empty. 

Our analysis shows that for any $(\rho, \sigma)$-bounded adversary $A$, $\PPTS$ maintains the following invariant: For each buffer $i$, the number of bad packets that must cross $i$---i.e., the number of bad packets in buffers $i' \leq i$ with destinations $w_k > i$---is at most one plus the excess $\xi^t(i) + 1 \leq \sigma + 1$. In particular, every buffer $L(i)$ contains at most $1 + \sigma + d$ packets, as it can contain at most $d$ packets that are not bad. The following proposition states the key property of \PPTS.  


\begin{prop}
  \label{prop:ppts}
  Let $A$ be any $(\rho, \sigma)$-bounded adversary such that all packets have destinations in $W$ with $d = \abs{W}$. Then the maximum buffer usage of $\PPTS$ is at most $1 + d + \sigma$.
\end{prop}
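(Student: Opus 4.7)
The plan is to establish the invariant announced in the text: for every round $t$ and every buffer $i$,
\[
B^t(i) \;:=\; \sum_{\substack{j \le i \\ k:\, w_k > i}} \max\bigl(|L_k^t(j)| - 1,\, 0\bigr) \;\le\; \xi^t(i) + 1.
\]
Granted the invariant, the proposition is immediate: $L^t(i)$ is a disjoint union of at most $d$ pseudo-buffers $L_k^t(i)$ with $w_k > i$, each contributing at most one non-bad packet on top of its local bad content, and the bad packets residing in $L^t(i)$ are a subset of those counted by $B^t(i)$; hence $|L^t(i)| \le d + B^t(i) \le 1 + d + \sigma$.

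The core of the proof is a \emph{forward-progress lemma}: for every $t$ and $i$, $B^{t+}(i) \le \max(B^t(i) - 1,\, 0)$. When $B^t(i) \ge 1$, I take $k^*$ to be the \emph{largest} destination index $k$ with $w_k > i$ that admits a bad $k$-pseudo-buffer at some position $\le i$. By the maximality of $k^*$, every destination $k' > k^*$ processed earlier by $\PPTS$ either is not activated, or has its leftmost bad pseudo-buffer strictly to the right of $i$; in either case, $\PPTS$'s running pointer remains $> i$ when it processes $k^*$. Hence $k^*$ is activated with an interval $[i_{k^*}, \pi - 1]$ containing $i$, and a direct \emph{bucket-brigade} calculation --- tracking, buffer-by-buffer, the $\pm 1$ changes in pseudo-buffer sizes induced by the simultaneous forwardings inside the interval --- shows that $E_{k^*}(i) := \sum_{j \le i} \max(|L_{k^*}(j)| - 1,\, 0)$ strictly decreases by at least one. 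For every other destination $k$ with $w_k > i$, the same bucket-brigade analysis split into cases (activation interval entirely to the left of $i$, entirely to the right of $i$, or $k$ unactivated) shows that $E_k(i)$ is non-increasing. Summing over $k$ with $w_k > i$ yields the lemma in the case $B^t(i) \ge 1$; the case $B^t(i) = 0$ reduces to the observation that the only way a forwarding step can create a new bad pseudo-buffer is by shifting a packet into $L_k(i_{k+})$, just past $k$'s activation interval, and when $B^t(i) = 0$ this $i_{k+}$ is always $> i$, so no new badness can appear in $B^{t+}(i)$.

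With the forward-progress lemma in hand, the induction on $t$ closes routinely. The base case $t = 0$ is immediate. For the inductive step, the injection step from round $t$ to round $t+1$ can add at most $N_{\{t+1\}}(i)$ units to $B(i)$, so
\[
B^{t+1}(i) \;\le\; \max\bigl(B^t(i) - 1,\, 0\bigr) + N_{\{t+1\}}(i) \;\le\; \xi^t(i) + N_{\{t+1\}}(i) \;\le\; \xi^{t+1}(i) + \rho \;\le\; \xi^{t+1}(i) + 1,
\]
where the second inequality uses the IH together with $\xi^t(i) \ge 0$, the third follows from Lemma~\ref{lem:excess}(2), and the last uses $\rho \le 1$.

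The main obstacle I expect is the per-destination bucket-brigade analysis, specifically at the right boundary $i_{k+}$ of each activation interval. There, the bucket brigade can shift a packet into an unactivated pseudo-buffer $L_k(i_{k+})$ and thereby create a new bad pseudo-buffer; the bookkeeping must confirm that this $+1$ event is either offset by the compensating $-1$ at $L_k(i_k)$ (so that $E_k(i)$ remains non-increasing for $i \ge i_{k+}$) or else occurs strictly to the right of the position $i$ being analyzed.
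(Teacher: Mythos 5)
Your proposal is correct and follows essentially the same route as the paper: the same invariant $B^t(i)\le\xi^t(i)+1$, the same ``bad-decrease'' step (your bucket-brigade calculation is exactly the content of Lemma~\ref{lem:bad-decrease}), and the same identification of the covering interval by tracking \PPTS's right-to-left pointer to the largest index $k^*$ with $w_{k^*}>i$ that has a bad pseudo-buffer at a position $\le i$. The only difference is presentational: the paper factors the per-interval bucket-brigade accounting out as Lemma~\ref{lem:bad-decrease} and then applies it once per activated interval, whereas you inline the calculation, but the argument is the same.
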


To facilitate the proof, 
we introduce the concept of ``badness,'' along with some new notation.

\begin{dfn}
  \label{dfn:parallel-badness}
  Let $i \in \Seq{n}$ be a buffer, and $w_k$ a destination such that $w_k > i$. Given $t$, we denote the number of $k$-bad packets in $i$ at time $t$ by $\beta_k^t(i)$. Formally, \( \beta_k^t(i) = \max\set{|L_k^t(i)| - 1, 0}.  \) The \dft{$k$-badness} of $i$, denoted $B_k^t(i)$, is the total number of bad packets in buffers $i' \leq i$ with destination $w_k$. That is, \( B_k^t(i) = \sum_{i' \leq i} \beta_k^t(i').  \) The \dft{badness} of $i$, denoted $B^t(i)$, is the total number of bad packets in buffers $i' \leq i$ with destinations $w > i$. Thus, \( B^t(i) = \sum_{k : w_k > i}B_k^i(t).  \)
\end{dfn}

(We stress that $B_k^t(i)$ and $B^t(i)$ count also badness of packets 
upstream from $i$.) In the proof of Prop.~\ref{prop:ppts}, we use the 
following key lemma (proof in Appendix \ref{app:basic-alg}).
\begin{lem}
  \label{lem:bad-decrease}
  Let $L$ be a configuration of $\Seq{n}$ with destinations $W = 
  \set{w_0 < w_1 < \cdots < w_{d-1}}$. Suppose that for some $k \in 
  \Seq d$, $I_k = [a_k, b_k] \subseteq \Seq{n}$ is an interval such 
  that $\abs{L_k(a_k)} \geq 2$ and $b_k < w_k$. Let $L^+$ be the 
  configuration obtained by forwarding only from nonempty 
  $k$-pseudo-buffers in $I_k$. For $i \in \Seq{n}$ let $B_k(i)$ and 
  $B_k^+(i)$ be the $k$-badness of $i$ before and after forwarding, 
  respectively. Similarly, $B(i)$ and $B^+(i)$ denote the badness of 
  $i$ before and after forwarding. Then
  \[
  \begin{array}{lcl}
    B_k^+(i) \leq
    \begin{cases}
      B_k(i) - 1 &\text{if } i \in I_k\\
      B_k(i) &\text{if } i \notin I_k~,
    \end{cases}
    &
    \qquad\text{and similarly,}\qquad&
    B^+(i) \leq
    \begin{cases}
      B(i) - 1 &\text{if } i \in I_k\\
      B(i) &\text{if } i \notin I_k~.
    \end{cases}
  \end{array}
  \]
\end{lem}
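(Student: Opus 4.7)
The plan is first to reduce the claim about $B$ to the claim about $B_k$: forwarding moves only $k$-packets, so $B_{k'}^+(i) = B_{k'}(i)$ for every $k' \ne k$. Using $B(i) = \sum_{k' : w_{k'} > i} B_{k'}(i)$, this gives $B^+(i) - B(i) = B_k^+(i) - B_k(i)$ whenever $w_k > i$, and both sides vanish otherwise. Every $i \in I_k$ satisfies $i \leq b_k < w_k$, and every $i \notin I_k$ is either $< a_k$ (nothing changes) or $> b_k$ (where the identity still holds, and is vacuous when $i \geq w_k$). So the two inequalities for $B$ follow from those for $B_k$.

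For $B_k$, set $c_j = \abs{L_k(j)}$ and $c_j^+ = \abs{L_k^+(j)}$, and use
\[
B_k(i) \;=\; \sum_{j \leq i} c_j \;-\; \abs{\set{j \leq i : c_j \geq 1}}.
\]
Since forwarding modifies $c_j$ only on $[a_k, b_k+1]$, both the ``mass'' part $\sum_{j \leq i}(c_j - c_j^+)$ and the ``support'' part $\abs{S^+\cap[0,i]} - \abs{S\cap[0,i]}$, where $S = \set{j : c_j \geq 1}$ and $S^+$ is the analogue for $c^+$, reduce to sums over $[a_k, b_k+1]\cap[0,i]$. A short telescoping of $c_j^+ - c_j$ over $[a_k, i]$ shows the mass part equals $1$ if $i \leq b_k$ and $c_i \geq 1$, and $0$ otherwise (the only possible outflow from $[a_k,i]$ is across the edge $(i, i+1)$).

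The key structural observation is that (i) $a_k \in S^+$ because $c_{a_k}^+ = c_{a_k} - 1 \geq 1$, and (ii) every $j \in S \cap [a_k, b_k]$ forwards a packet to $j+1$, so $j + 1 \in S^+$. The positions $\set{j+1 : j \in S \cap [a_k, b_k]}$ lie in $[a_k+1, b_k+1]$ and are pairwise distinct, so they are all distinct from $a_k$. Intersecting with $[a_k, i]$ yields
\[
\abs{S^+ \cap [a_k, i]} \;\geq\; 1 + \abs{S \cap [a_k, \min(i-1, b_k)]}.
\]

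Combining the two pieces finishes the proof. For $i \in I_k$, using $\abs{S \cap [a_k, i]} = \abs{S \cap [a_k, i-1]} + [c_i \geq 1]$ (Iverson bracket), the support bound gives $\abs{S^+ \cap [a_k, i]} - \abs{S \cap [a_k, i]} \geq 1 - [c_i \geq 1]$, whence $B_k(i) - B_k^+(i) \geq [c_i \geq 1] + (1 - [c_i \geq 1]) = 1$. For $i > b_k$ the mass part vanishes, and $\abs{S^+ \cap [a_k, b_k+1]} \geq 1 + \abs{S \cap [a_k, b_k]} \geq \abs{S \cap [a_k, b_k+1]}$, so $B_k^+(i) \leq B_k(i)$. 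The main obstacle is careful bookkeeping at the boundary positions $a_k$ and $b_k+1$: one must check that the extra element $a_k \in S^+$ guaranteed by $c_{a_k} \geq 2$ is genuinely extra and not of the form $j+1$ for some $j \in S \cap [a_k, b_k]$ (which it cannot be, since $a_k - 1 \notin I_k$ and hence does not forward).
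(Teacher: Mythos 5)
Your proof is correct. The reduction from $B$ to $B_k$ is clean: forwarding touches only $k$-packets, so for every $k'\neq k$, $B_{k'}^+(i)=B_{k'}(i)$, and for $i\in I_k$ the hypothesis $b_k<w_k$ guarantees the $k$-term actually appears in the sum defining $B(i)$. For $i\ge w_k$ the $k$-term is absent and both sides are unchanged; you implicitly cover this in the ``vacuous'' remark. For the core inequality on $B_k$, the identity
\[
B_k(i)=\sum_{j\le i}c_j-\bigl|\{j\le i: c_j\ge 1\}\bigr|
\]
is exactly right, and the split into a mass term (telescoping outflow across $(i,i+1)$) and a support term is sound. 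The two structural facts---$a_k\in S^+$ because $c_{a_k}\ge 2$, and $j\in S\cap[a_k,b_k]\Rightarrow j+1\in S^+$ with all these successor positions in $[a_k+1,b_k+1]$ and hence distinct from $a_k$---give exactly the $\ge 1+|S\cap[a_k,\min(i-1,b_k)]|$ bound you need, and the boundary bookkeeping (including $i=a_k$, $i\le b_k$ with $c_i=0$, and $i>b_k$) checks out. The one subtlety you gloss over is absorption at the sink: if $b_k+1=w_k$, a packet forwarded from $b_k$ vanishes and $b_k+1\notin S^+$; but in that case the range $b_k<i<w_k$ is empty and $B_k(i)$ is not defined at $i=w_k$, so no claim is at stake---worth a sentence, but not a gap.

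On approach: the paper argues the same inequality by directly tracking how the per-buffer badness counts $\beta_k(i')$ change as packets shift one hop to the right, which amounts to a case analysis on whether each buffer in $[a_k,b_k+1]$ sends, receives, both, or neither. Your decomposition into a ``mass'' part and a ``support'' part is an equivalent but tidier bookkeeping device: it replaces the per-buffer case analysis with a single conservation argument plus a cardinality comparison between $S$ and $S^+$. Both arguments use the same two facts ($c_{a_k}\ge 2$ ensures $a_k$ stays occupied; forwarding shifts the occupied set right by one inside $I_k$); your version packages them more algebraically, at the cost of having to be explicit about what happens at the two interval endpoints. Either style is acceptable here.
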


\subsection{Extension to Trees}
In Appendix~\ref{app:trees}, we explain how to extend both \PTS and 
\PPTS to the case of directed trees. We assume that all edges are
directed toward the root, and that packets follow directed paths.
In this setting we prove the following.
\begin{prop}
	\label{prop:tree-ppts}
	Let $A$ be any $(\rho, \sigma)$-bounded adversary such that all 
	packets have destinations in $W$. Then the maximum buffer usage of 
	$\PPTS$ is at most $1 + d' + \sigma$, where $d'$ is an upper bound 
	on the number of destinations in any leaf-root path.
\end{prop}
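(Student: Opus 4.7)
The plan is to lift the algorithm $\PPTS$ and its badness-based analysis from a line to a tree $T$ rooted at the common sink. Since each node $v$ has a unique out-edge toward its parent, at most one pseudo-buffer per node may fire in any round. I would extend $\PPTS$ as follows: process destinations in order of increasing depth (root-closest first), and for each destination $w_k$ search its subtree for the deepest bad $k$-pseudo-buffer whose upward path to $w_k$ does not meet a node already claimed by a previously processed destination, then activate the $k$-pseudo-buffers along that path. This is the natural tree analog of the ``staircase'' of intervals used on the line.

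For the analysis, redefine badness in the tree by letting $B^t(v)$ count the bad packets stored in the subtree rooted at $v$ whose destinations are strict ancestors of $v$, and maintain the invariant $B^t(v) \leq \xi^t(v) + 1 \leq \sigma + 1$ for every node $v$ and every round $t$. Granting this invariant, the buffer bound at $v$ follows as in Prop.~\ref{prop:ppts}: every packet stored at $v$ has destination a strict ancestor of $v$, and these ancestors lie on the unique path from $v$ to the root, which is a sub-path of some root-leaf path; hence the number of pseudo-buffers at $v$ is at most $d'$. Non-bad packets contribute at most $d'$ (one per nonempty pseudo-buffer), while bad packets at $v$ contribute at most $B^t(v) \leq \sigma + 1$, giving a total of at most $1 + d' + \sigma$.

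The inductive step mirrors the line proof. Lemma~\ref{lem:excess}(2) bounds the per-round increment of $B^t(v)$ caused by injection, so the work reduces to proving a tree-version of Lemma~\ref{lem:bad-decrease}: whenever $B^t(v) \geq 1$, the algorithm activates some $k$-pseudo-buffer at $v$ and the subsequent forwarding decrements $B^t(v)$. The main obstacle is handling conflicts between activation paths for different destinations. On the line, all destinations share a single chain, so the staircase automatically yields disjoint intervals; in a tree, two destinations in different branches may compete for pseudo-buffers on the shared prefix of their paths to the root. I expect that processing destinations root-first and extending each activation as deep as possible subject to previous claims suffices to ensure that whenever a bad packet must cross $v$, some activation reaches $v$'s out-edge---either for its own destination or for a preempting root-closer one---but verifying this in all configurations will constitute the bulk of the case analysis.
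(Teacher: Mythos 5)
Your high-level framework is the right one and almost certainly mirrors the paper's: partition each buffer into per-destination pseudo-buffers, observe that the pseudo-buffers that can be nonempty at a node $v$ are indexed by the destinations on the $v$-to-root path (hence at most $d'$ of them), define $B^t(v)$ as the number of bad packets in the subtree of $v$ destined for strict ancestors of $v$, and maintain $B^t(v)\le\xi^t(v)+1$ so that the buffer bound $1+d'+\sigma$ follows exactly as in Prop.~\ref{prop:ppts}. The processing order (root-closest destination first) is also the correct analogue of PPTS. However, the algorithm as you describe it has two genuine defects that would cause the key lemma to fail, and the proof of that lemma is explicitly deferred.

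First, searching for ``the deepest bad $k$-pseudo-buffer'' presumes uniqueness; in a tree, there may be several incomparable minimal bad $k$-pseudo-buffers in disjoint branches of $T_{w_k}$. If you activate a path from only one of them, the badness of nodes in the untouched branch need not decrease, and the invariant $B^t(v)\le\xi^t(v)+1$ breaks there. The algorithm must activate a path from each minimal bad $k$-pseudo-buffer; the resulting paths form a subtree, and the accounting at merge points (a node receiving from two children, forwarding only one) must be handled: each activated path root contributes $-1$ to badness, while each merge contributes at most $+1$, and since there is one fewer merge than path roots inside any subtree, the net change is still $\le-1$. Second, requiring the entire upward path to $w_k$ to avoid claimed nodes is not the correct analogue of the PPTS staircase. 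On the line, the interval for $w_k$ is $[i_k,\,i-1]$, stopping at the previous boundary $i$; the tree version should likewise truncate the activation at the first claimed node rather than abandon it. Otherwise a bad pseudo-buffer at a child $u$ of a claimed node $v$ is never activated and $B(u)$ never decreases. Your remark about ``extending each activation as deep as possible subject to previous claims'' gestures at the right fix but contradicts the earlier description and is not developed; in particular the interaction between truncation and the multi-branch activation above is exactly the content of the tree version of Lemma~\ref{lem:bad-decrease}, which is the heart of the proof and which you leave unproved.
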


\section{Hierarchical Algorithm}
\label{sec:hierarchical-alg}
In this section, we describe Algorithm
$\HPTS$ (Hierarchical Peak-to-Sink), that achieves significantly better 
performance than $\PPTS$ in the case where the rate of the adversary 
satisfies $\rho \leq 1/2$. The algorithm description is parametrized by 
the number of \dft{levels} in the construction, denoted $\ell \in \N$. 
In the 
case $\ell = 1$, $\HPTS$ reduces to $\PPTS$. In general case, we obtain 
the following result. For the remainder of the section, we assume that 
$\rho$ and $\ell$ satisfy $\rho \cdot \ell \leq 1$.

\begin{thm}
  \label{thm:hpts}
  There exists an algorithm, $\HPTS$, such that for every positive integer $\ell$ and $(\rho, \sigma)$-bounded adversary $A$ such that $\rho \cdot \ell \leq 1$, the space requirement of $\HPTS$ is at most $\ell n^{1/\ell} + \sigma + 1$.
\end{thm}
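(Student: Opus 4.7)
The plan is to prove the theorem by induction on $\ell$, with $\HPTS$ defined recursively. The base case $\ell = 1$ coincides with $\PPTS$, so Proposition~\ref{prop:ppts} immediately yields the bound $n^{1/1} + \sigma + 1$, using $d \le n$.

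For the inductive step, I would partition the path $\Seq{n}$ into $b = n^{1/\ell}$ consecutive blocks $B_0, \ldots, B_{b-1}$ of length $n^{(\ell-1)/\ell}$, with right endpoints $r_0 < r_1 < \cdots < r_{b-1}$. Each buffer $L(i)$ is partitioned into two tiers: a \emph{top tier} of at most $b$ pseudo-buffers indexed by block endpoints $r_j > i$, holding packets whose destination lies in $B_j$; and a \emph{local tier} maintained by a recursive instance of $\HPTS$ with parameter $\ell - 1$ run on the block containing $i$. A newly injected packet $P$ is placed into the top pseudo-buffer of its destination's block; once it reaches the leftmost node of that block it is transferred to the local tier, where the recursive instance routes it to its actual destination. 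Scheduling leverages the $\ell$-reduction (Lemma~\ref{lem:ell-reduction}): dedicate one out of every $\ell$ rounds to a top-tier $\PPTS$ step on the endpoints $r_j$, and the remaining $\ell - 1$ rounds to the recursive $\HPTS$ instances within each block. At the top tier, the effective rate is $\rho\ell \le 1$ over $b$ destinations, so Proposition~\ref{prop:ppts} bounds the top-tier occupancy by $1 + b + \sigma$. At the local tier, the effective rate within a block is at most $\rho\ell/(\ell-1) \le 1/(\ell-1)$ over at most $n^{(\ell-1)/\ell}$ destinations with $\ell - 1$ levels, so the inductive hypothesis bounds the local occupancy by $(\ell-1)\,n^{1/\ell} + \sigma + 1$.

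Summing the two tiers naively would give $\ell\, n^{1/\ell} + 2\sigma + 2$, which exceeds the target bound by a $\sigma + 1$ term. The main obstacle, then, is to establish a \emph{unified badness invariant}: the total number of bad packets at or upstream of any buffer $i$, summed across all $\ell$ levels of pseudo-buffers, is at most $\xi^t(i) + 1 \le \sigma + 1$. I would attempt this by extending Lemma~\ref{lem:bad-decrease} to the hierarchical setting, arguing that each forwarding step---whether at the top tier or inside a recursive instance---reduces aggregate badness by one at every buffer in the activated interval, and that the transfer of a packet from the top tier to the local tier at a block boundary merely shifts the packet between pseudo-buffers without creating new badness. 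Combined with at most $\ell \cdot b$ ``non-bad'' slots (one per pseudo-buffer across all levels), the invariant yields the desired total of $\ell\, n^{1/\ell} + \sigma + 1$. Verifying that the simultaneously scheduled top-tier and local-tier activations remain edge-disjoint, and that badness accounting cleanly survives the cross-tier packet transfers, is where I expect the technical bulk of the proof to lie.
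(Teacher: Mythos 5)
Your high-level plan shares the paper's skeleton---partition each buffer into $\ell$ tiers of $n^{1/\ell}$ pseudo-buffers each, run a $\PPTS$-style routine per tier under time-division multiplexing (each tier gets one round in $\ell$), and close the gap left by the na\"ive $2\sigma$ via a unified badness invariant. The paper does not phrase this as an $\ell\to\ell-1$ recursion; instead it works with a flat base-$m$ decomposition, intermediate destinations, and levels indexed by digit position, but this is a matter of presentation and the two views match.

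Where the proposal has a genuine gap is the step you describe as ``the transfer of a packet from the top tier to the local tier at a block boundary merely shifts the packet between pseudo-buffers without creating new badness.'' This is not true as stated, and it is not merely a verification detail: when a top-tier (level-$\lambda$) pseudo-buffer at the node just before a block boundary forwards, the forwarded packet arrives at the block's left endpoint and must be filed into a lower-level pseudo-buffer. If that target pseudo-buffer is already non-empty, the arriving packet sits at position $\geq 2$ and is bad, so aggregate badness at that buffer \emph{increases} rather than stays constant. The paper handles this with an explicit extra step in the algorithm, $\ActivatePreBad$: whenever a packet $P$ is about to make the last hop of its level-$j$ segment and the destination's level-$j'$ pseudo-buffer ($j'<j$) is non-empty, the algorithm proactively activates a level-$j'$ forwarding path out of that pseudo-buffer in the same round to make room for $P$. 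Only with this mechanism does the per-phase badness-decrease lemma (the analogue of Lemma~\ref{lem:bad-decrease} you allude to, namely Lemma~\ref{lem:hpts-bad-decrease}) go through. Your proposal needs to add this activation step to the algorithm and fold its effect into the edge-disjointness and badness arguments; without it, the invariant $B^t(i)\le\xi^t(i)+1$ can be violated at block boundaries and the $\sigma+1$ bound on badness does not follow.

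A smaller point: your rate accounting for the local tier ($\rho\ell/(\ell-1)$ after giving the local instances $\ell-1$ of every $\ell$ rounds) is arithmetically consistent, but the paper instead applies the $\ell$-reduction once globally and gives every level one round per phase; this avoids re-deriving a fresh $(\rho',\sigma)$-bound at each recursion depth and, more importantly, keeps a single excess quantity $\xi^t(i)$ against which to charge badness uniformly across all levels. If you keep the recursive formulation, you will need to be careful that the burstiness parameter used in the inductive hypothesis is the \emph{same} $\sigma$ and not inflated---otherwise the unified invariant you need cannot be stated cleanly.
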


\subsection{Partitioning the network}

Our algorithm is based on a hierarchical partition of the line.  For simplicity, we assume that $n=m^\ell$ for some integers $m$ and $\ell$. We consider the base-$m$ representation of buffer indices.  That is, given $i \in \Seq n$ we write the string $i=i_{\ell-1}i_{\ell-2}\cdots i_0$, where $i_j\in\Seq m$ for $j=0,\ldots \ell-1$ and $i=\sum_{j=0}^{\ell-1}i_jm^j$. We refer to $i_j$ as the $j\th$ \emph{digit}, or the digit in the $j\th$ \emph{position}, of $i$.

Using this convention, for $j\in\Seq{\ell}$ and $r\in\Seq{m^{\ell-j-1}}$, define the intervals $I_{j, r}$ by
\[
I_{j, r} = \Set{rm^{j+1} + k \sucht 0\le k<m^{j + 1}}~.
\]
Intuitively, the interval $I_{j,r}$ consists of all $m^j$ nodes whose most significant $\ell-j-1$ digits have value $r$. For each fixed $j$, the set of intervals
\(
\calI_j = \set{I_{j, r} \sucht r \in \Seq{m^{j}}}
\)
forms a partition of $\Seq{n}$. We call $\calI_j$ the \dft{level-$j$} partition, and refer to each $I_{j,r} \in \calI_j$ as a \dft{level-$j$} interval.  For each $j > 0$, each level-$j$ interval contains exactly $m$ level-$(j-1)$ subintervals. We denote the union of the partitions by $\calI = \calI_0 \cup \calI_1 \cup \cdots \cup \calI_{\ell-1}$.

The idea of $\HPTS$ is to run an independent instance of $\PPTS$ in each interval of $\calI$, allowing each interval of level $j$ to have exactly $m$ (intermediate) destinations: one for each left endpoint of a subinterval of level $j-1$.  Intermediate destinations are chosen so as to ``correct'' the index of a packet's location to the index of its final destination, digit by digit: position $j$ of the index is corrected at level $j$. Formally, we have the following. 

\begin{dfn}
  \label{dfn:intermediate-dest}
  Let $i,w\in\Seq{n}$. Assume that $i<w$ and that the largest position 
  (in base-$m$ notation) in which $i$ and $w$ differ is $j$.  Then the 
  \dft{intermediate destination of $i$ to $w$} is 
  $x(i,w)=\floor{w/m^j}m^j$. The route $[i,x(i,w)]$ is called a 
  \dft{segment}, and its \dft{level}, denoted $\level(i, w)$, is $j$.

  A packet $P$ with destination $w$ residing in a buffer $L(i)$, \dft{crosses $i$ at level} $\level(i, P) = \level(i, w)$. If $j = \level(i, P)$, we define $P$'s \dft{level-$j$ intermediate destination} to be $w_{P, j} = x(i, w)$.
\end{dfn}

If the largest position in which $i$ and $w$ differ is $j$, then the 
segment from $i$ to $x(i,w)$ is contained in a level-$j$ interval 
$I_{j}$.
If $j \geq 1$, then for all $j' \le j$, $x(i,w)$ is the left endpoint 
of some level-$j'$ interval. 
If a packet is injected into buffer $i$ with destination $w$, we can think of its trajectory $[i, w]$ as being partitioned into segments $[i, x(i, w) - 1], [x(i, w), x(x(i,w), w)], \ldots$, where the level of these segments is strictly decreasing. Further, with the possible exception of the initial segment, the left endpoints of all segments are left endpoints of intervals $I \in \calI$.

\begin{figure}[t]
  \begin{center}
    \includegraphics[scale=0.7]{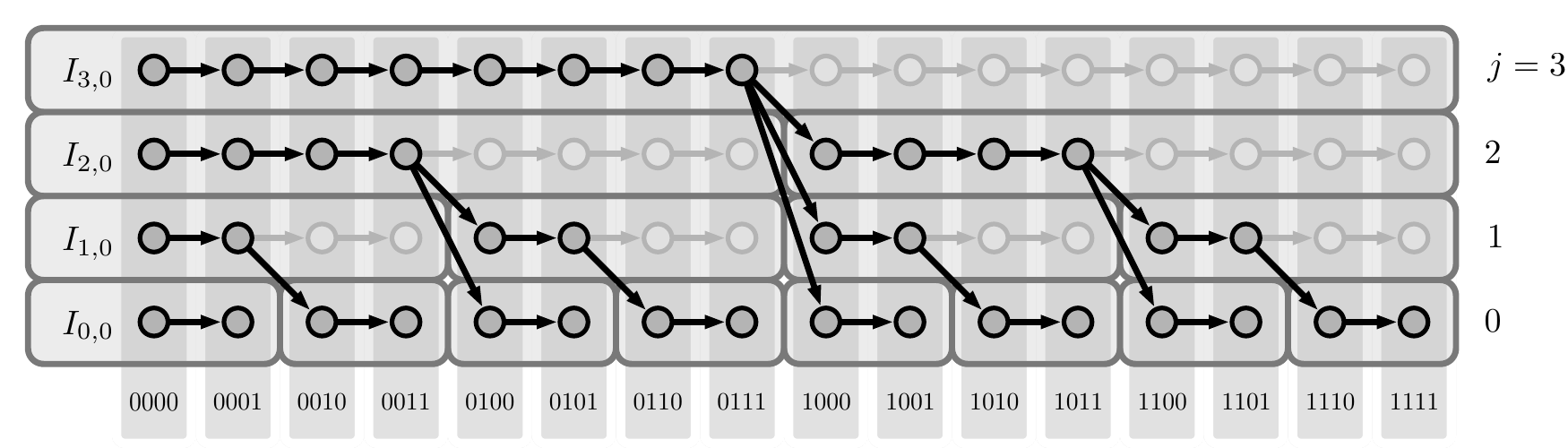}    
  \end{center}
  \caption{\it\small The network with $n = 16$, $m = 2$, and $\ell = 4$. Each column represents a single buffer, and each row represents a level. The horizontal boxes indicate divisions between intervals in $\calI$. Given a packet $P$ injected into $i$ with destination $w$, the virtual trajectory of $P$ is given by the unique path from a pseudo-buffer in $i$ to a level-$0$ pseudo-buffer in $w - 1$.}
  \label{fig:interval-tree}
\end{figure}

When $P$ is stored in a buffer $i$, we think of $P$ as virtually residing in the subinterval $I_{j}$ at level $j = \level(i, P)$ containing $i$. When $P$ is forwarded to its next intermediate destination, $w_{P, j} = x(i, w)$, its virtual location moves the corresponding interval at level $j' = \level(w_{P, j}, P)$. We emphasize that the \emph{virtual} location of each packet $P$ in a buffer $i$ (i.e., level and intermediate destination) is a function of $i$ and $P$'s destination. Thus, $i$ can compute the virtual location of each packet it stores locally. Figure~\ref{fig:interval-tree} illustrates the virtual motion packets through the network. Formally, the $\HPTS$ algorithm implements the virtual packet movement by partitioning each buffer into pseudo-buffers, as defined below.




\begin{dfn}
  \label{dfn:jk-pseduo-buffer}
  Let $t \in \N$, $i \in \Seq{n}$, $j \in \Seq{\ell}$ and $k \in\Seq m$. Let $I_j(i)$ be the level-$j$ interval containing $i$, and let $W_{j}(i) = \set{w_0 < w_1 < \cdots < w_{m-1}}$ denote the left endpoints of the level-$(j-1)$ intervals contained in $I_j(i)$.  We define the \dft{$(j,k)$-pseudo-buffer} of $i$ at time $t$, denoted $L_{j, k}^t(i)$, to contain all packets in $L^t(i)$ whose current segment level is $j$ and whose next intermediate destination is $w_k$.
\end{dfn}

Since $j \in\Seq{\ell}$ and $k\in\Seq{m}$, each buffer $L^t(i)$ is partitioned into $\ell m = \ell n^{1/\ell}$ pseudo-buffers. As in our description of the $\PPTS$ algorithm, we define a notion of badness for packets, pseudo-buffers, and buffers.

\begin{dfn}
  Let $t \in \N$, $i \in \Seq{n}$, $j \in \Seq{\ell}$, and $k \in 
  \Seq{m}$. We say that the $(j, k)$-pseudo-buffer $L_{j,k}^t(i)$ is 
  \dft{bad} at time $t$ if $|{L_{j, k}^t(i)}| \geq 2$. A packet in 
  $L_{j, k}^t(i)$ is called \dft{bad} if its position in $L_{j, 
  k}^t(i)$ is at least $2$.  The number of bad packets in 
  $L_{j,k}^t(i)$ is $\beta_{j, k}^t(i) = \max \set{|L_{j, k}^t(i)| - 1, 
  0}$.
  
   \end{dfn} 
   In the definition below, note that the badness of a node accounts 
   also for bad packets in other (upstream) nodes.
   \begin{dfn}
  
  Let $i\in\Seq n, j\in\Seq\ell, k\in\Seq m$ and let $I = [a, b]$ be a level-$j$ interval containing $i$. The \dft{$(j, k)$-badness} of a node $i$ at time $t$, denoted $B_{j,k}^t(i)$, is defined by
  \(
  B_{j, k}^t(i) = \sum_{i' = a}^i \beta_{j, k}^t(i').
  \)
  That is, $B_{j, k}^t(i)$ is the number of bad packets in buffers $i' \in I$ with $i' \leq i$ whose segment level at $i$ is $j$ and whose level $j$ intermediate destination is $w_k$. The \dft{level-$j$ badness} of $i$ is $B_j^t(i)= \sum_{k \in \Seq{m}} B_{j, k}^t(i)$, and the \dft{badness} of $i$ is $B^t(i)=\sum_{j \in \Seq{\ell}} B_j^t(i)$.
\end{dfn}

\subsection{Algorithm description}

\begin{minipage}[t]{.43\textwidth}
  \begin{algorithm}[H]\small
    \caption{$\HPTS(n, \ell, t)$ step}  \label{alg:hpts}
    \begin{algorithmic}[1]
      \STATE $\calA \leftarrow \varnothing$
      \hfill\COMMENT{active pseudo-buffers}
      \STATE $\lambda \leftarrow t \bmod \ell$
      \IF{$\lambda = 0$}\label{l:lump1}
      \STATE accept round $(t-\ell)$-,$\ldots,(t - 1)$-injections  
      \ENDIF\label{l:lump2}
      \FORALL{$r \in\Seq{m^{\ell-\lambda}}$}\label{l:ppts1}
      \STATE $\calA\gets\FormPaths( I_{\lambda, r}, \lambda)$
      \ENDFOR    \label{l:ppts2}
      \FOR{$j = \lambda - 1$ \Downto $0$}\label{prebad1}
      \STATE $\calA\gets\ActivatePreBad(\calA, j)$
      \ENDFOR\label{prebad2}
      \STATE each nonempty $L_{j, k}(i) \in \calA$ forwards a packet \label{l:fwd}
    \end{algorithmic}
  \end{algorithm}
\end{minipage}
\hspace*{.04\textwidth}
\begin{minipage}[t]{.53\textwidth}
  \begin{algorithm}[H]\small
    \caption{$\FormPaths(I,j)$}
    \label{alg:form-paths}
    \begin{algorithmic}[1]
      \STATE Let $W = \set{w_0 < w_1 < \cdots < w_{m-1}}$ be the 
      intermediate
      destinations in $I$
      \STATE $\calA\gets\varnothing$; $i' \leftarrow w_{m-1}$
      \FOR{$k\gets m-1$ \Downto $0$}
      \IF{there exists $i < i'$ such that $i$ is \emph{bad} for $w_k$}
      \STATE $i_k \leftarrow \min\set{i'' \sucht |L_{j,k}^t(i'')| \geq 2}$
      \STATE $\calA \leftarrow \calA \cup \set{L^t_{j, k}(i) | i \in [i_k, \min\set{i' - 1, w_k - 1}]}$
      \STATE $i' \leftarrow i_k$
      \ENDIF
      \ENDFOR
      \RETURN $\calA$
    \end{algorithmic}
  \end{algorithm}	
\end{minipage}
\bigskip

The idea of Algorithm $\HPTS$ (Alg.~\ref{alg:hpts}) is as follows.  Let 
us refer to a sequence of rounds $t, t + 1, \ldots, t + \ell - 1$ with 
$t \equiv 0 \pmod \ell$ as a \dft{phase}. That is, the $\phi\th$ phase 
consists of rounds $(\phi - 1) \ell, \ldots, \phi \ell-1$. We assume 
that all packets injected during a phase $\phi$ are only accepted 
during the first round of phase $\phi + 1$ 
(Lines~\ref{l:lump1}--\ref{l:lump2}). 
This results in $A_\ell$, i.e., the $\ell$-reduction of $A$ 
(cf.~Def.~\ref{dfn:ell-reduction}).


Due to capacity constraints, not all pseudo-buffers can simultaneously 
forward packets, so $\HPTS$ employs \emph{time-division multiplexing}: 
At any time $t$, only the level-$\lambda$ intervals are activated, 
where 
$\lambda \equiv t\bmod\ell$. Since same-level intervals are 
edge-disjoint, all intervals in $\calI_\lambda$ can be active in 
parallel (Lines~\ref{l:ppts1}--\ref{l:ppts2}).

We note that the family of buffers activated by 
Algorithm~\ref{alg:form-paths} is identical to the activation pattern 
in an iteration of $\PPTS$ in which~(1) only level-$j$ packets are 
considered, and~(2) the destination of each packet is taken to be its 
level-$j$ intermediate destination. The difference between $\HPTS$ and 
$\PPTS$ is in Lines~\ref{prebad1}--\ref{prebad2} of 
Alg.~\ref{alg:hpts}: Activating buffers in one level may cause the 
activation of buffers in lower levels. This is done to anticipate when 
a packet reaches an intermediate destination and switches level---and 
interval instance---where it may become bad in the new instance. 
Activating the new instance just before a new packet arrives to a bad 
position is sufficient to avoid an increase of badness for any buffer 
in a given round. Formally, we define the notion of pre-badness as 
follows.

\begin{wrapfigure}[12]{r}{.53\textwidth}
	\begin{minipage}{.53\textwidth}\vspace{-9mm}
		\begin{algorithm}[H]\small
			\label{alg:activate-pre-bad}
			\caption{$\ActivatePreBad(\calA, j)$}
			\begin{algorithmic}[1]
				\FORALL{$r \in \Seq{m^{\ell-j-1}}$}
				\STATE $I = [a, b] \leftarrow I_{j, r}$
				\IF{$\exists P$ pre-bad for $a$ at level $j$ and $a$ is 
				inactive}
				\STATE $k \leftarrow$ index s.t. $w_k = w_{P, j}$
				\STATE $w \leftarrow \max\set{i \in I \sucht i \leq w_k 
				\text{ and } [a, i] \text{ is 
				inactive}}$\label{l:inactive}
				\STATE $\calA \leftarrow \calA \cup \set{L_{j, k}^t(i) 
				\sucht i \in [a, w]}$
				\ENDIF
				\ENDFOR
				\RETURN $\calA$
			\end{algorithmic}
		\end{algorithm}		
	\end{minipage}
\end{wrapfigure}

\begin{dfn}
  \label{dfn:pre-bad}
  Suppose $L^t_{j,k}(i)$ is activated and let $P \in L^t_{j,k}(i)$ be the packet sent by $L^t_{j,k}(i)$. Suppose that $P$'s level-$j$ intermediate destination is $i + 1$ (i.e., $P$ is about to make the last hop in its level-$j$ segment), where it will be associated with pseudo-buffer $L_{j', k}(i')$ with $j'<j$. We say that the packet $P$ is \dft{pre-bad} if $|L_{j', k}(i+1)|\geq 1$. In this case, we also call the buffer $i+1$ \dft{pre-bad for $P$ at level $j'$}.
\end{dfn}


\subsection{Algorithm analysis}

Our strategy in the analysis of $\HPTS$ is to bound the badness of each 
buffer $i$ by $\sigma$ at the end of each phase (just as our analysis 
of $\PPTS$ bounds the badness of each buffer at the end of each 
round).  Since each buffer contains $\ell \cdot m = \ell \cdot 
n^{1/\ell}$ pseudo-buffers, each buffer can contain at most $\ell \cdot 
n^{1/\ell}$ non-bad packets. Thus Theorem~\ref{thm:hpts} follows from 
giving an appropriate bound on the badness of each buffer. The basic 
argument is the same as our analysis of $\PPTS$: apply 
Lemma~\ref{lem:excess} to show that the increase in badness of a buffer 
is at most one more than the increase in excess of the buffer, then 
show that forwarding reduces the badness of the buffer by at least one 
(if it is positive). First, we show that the forwarding in 
Line~\ref{l:fwd} of $\HPTS$ is feasible (proof in Appendix 
\ref{app:hierarchical-alg}).

\begin{lem}
  \label{lem:hpts-feasible}
  Consider the set $\calA$ of active pseudo-buffers immediately before 
  Line~\ref{l:fwd} of $\HPTS$. Then $\calA$ is \emph{feasible},
  i.e., 
  for each $i \in \Seq{n}$, there is at most one active pseudo-buffer 
  $L_{j, k}^t(i) \in \calA$.
\end{lem}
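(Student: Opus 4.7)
My plan is to split the activation process into its two conceptual phases and track feasibility through each: first the level-$\lambda$ activations produced by the $\FormPaths$ calls (Lines~\ref{l:ppts1}--\ref{l:ppts2}), then the successive $\ActivatePreBad$ calls at levels $j = \lambda - 1, \ldots, 0$ (Lines~\ref{prebad1}--\ref{prebad2}). The overall statement to maintain as an invariant is: \emph{after each operation, every buffer $i \in \Seq n$ contains at most one active pseudo-buffer in $\calA$.}

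For the $\FormPaths$ phase, I would use two facts. First, the level-$\lambda$ intervals $\{I_{\lambda,r}\}_{r \in \Seq{m^{\ell-\lambda}}}$ form a partition of $\Seq n$, so calls to $\FormPaths$ for different values of $r$ act on disjoint sets of buffers and cannot conflict. Second, within a single call to $\FormPaths(I_{\lambda, r}, \lambda)$, I would argue by induction on the loop index $k = m-1, m-2, \ldots, 0$ that the ranges of buffers activated in successive iterations are pairwise disjoint. Concretely, if in iteration $k$ the algorithm activates the range $[i_k, \min(i'_{\text{prev}} - 1, w_k - 1)]$ and then sets $i' \gets i_k$, any later iteration $k' < k$ only proceeds when it finds some bad pseudo-buffer with index $i_{k'} < i' = i_k$; the range $[i_{k'}, \min(i_k - 1, w_{k'} - 1)]$ it activates is then contained in $[0, i_k - 1]$, hence disjoint from the earlier range. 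Since each such range activates a single $(j, k)$-pseudo-buffer per buffer, the invariant is preserved.

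For the $\ActivatePreBad$ phase, the key observation is that the algorithm's guard is designed precisely to enforce feasibility. For each level-$j$ interval $I_{j, r} = [a, b]$, $\ActivatePreBad$ only proceeds when $a$ is \emph{inactive} in the current $\calA$, and then Line~\ref{l:inactive} sets $w$ to be the largest index such that the entire prefix $[a, w] \subseteq I$ is still inactive. Activating the $(j, k)$-pseudo-buffers in $[a, w]$ therefore touches only buffers that currently have no pseudo-buffer in $\calA$, so the invariant is preserved. Across different $r$'s at a fixed level $j$, the level-$j$ intervals partition $\Seq n$, so no cross-interval conflict is possible; across different levels $j' > j$ processed earlier in the loop, the ``$[a, w]$ inactive'' condition automatically excludes any buffer already activated by $\FormPaths$ or by an earlier iteration of $\ActivatePreBad$. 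Combining the two phases gives the lemma.

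The step I expect to be most delicate is the disjointness argument inside a single $\FormPaths$ call, because one must be careful about the interaction of the two clamps $i' - 1$ and $w_k - 1$ in the activated range and about the fact that the update $i' \gets i_k$ only happens in iterations that actually activate some range. Once that monotonicity of $i'$ and the strict inequality $i_{k'} < i_k$ are nailed down, the rest of the proof consists of routine bookkeeping against the ``inactive'' test in $\ActivatePreBad$.
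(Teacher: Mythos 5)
Your proof is correct, and it takes the natural invariant-tracking approach that the paper almost certainly uses: maintain that every buffer has at most one active pseudo-buffer, then check (i) the level-$\lambda$ intervals partition $\Seq n$ so distinct $\FormPaths$ calls cannot collide, (ii) within one $\FormPaths$ call the activated ranges shrink to the left because $i'$ is monotone decreasing and the next iteration's upper clamp is $i' - 1 = i_k - 1$, and (iii) $\ActivatePreBad$ only ever touches a maximal inactive prefix $[a,w]$ of a level-$j$ interval. One small point worth making explicit, which you correctly flag as the delicate part: the range in iteration $k$ is nonempty and well-formed because $i_k < i'$ (from the guard) gives $i_k \le i'-1$, and $i_k < w_k$ (a level-$j$ packet with intermediate destination $w_k$ sitting at $i_k$ must satisfy $i_k < w_k$) gives $i_k \le w_k - 1$; this also guarantees the range stays inside $I$. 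Beyond that, your bookkeeping is sound and matches the structure of the algorithm's guards.
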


Lemma~\ref{lem:hpts-bad-decrease} asserts that if a node has positive 
badness at the beginning of a phase, its badness strictly decreases 
during the phase. 
Thus, it is analogous to 
the analysis of $\PTS$ and $\PPTS$ (see App.~\ref{app:basic-alg}). 
Thm.~\ref{thm:hpts} follows from an argument analogous to the proofs of 
Prop.~\ref{prop:pts} and Prop.~\ref{prop:ppts}.

\begin{lem}
  \label{lem:hpts-bad-decrease}
  Let $t = (\varphi - 1) \ell$ so that the $\phi\th$ phase consists of rounds $t + 1, t+2,\ldots, t + \ell$. Assume that no new packets arrive during rounds $t+2, t+3, \ldots, t + \ell$. Then the $\HPTS$ algorithm guarantees that for all $i \in \Seq{n}$, $B^{(t + \ell)+}(i) \leq \max\set{B^{t+1}(i) - 1, 0}$.
\end{lem}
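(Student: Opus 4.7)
The plan is to decompose the total badness as $B^s(i) = \sum_{j \in \Seq{\ell}} B_j^s(i)$ and track its evolution across the $\ell$ rounds of the phase, $s = t+1, t+2, \ldots, t+\ell$. Since each level $\lambda \in \Seq{\ell}$ is processed in exactly one round of the phase (the unique $s$ with $s \bmod \ell = \lambda$), the argument rests on two per-round invariants: \emph{monotonicity}, namely that at every node $i$ and every level $j$, $B_j(i)$ is non-increasing across each round of the phase; and \emph{progress}, namely that in the round processing level $\lambda$, $B_\lambda(i)$ strictly decreases by at least one whenever $B_\lambda(i) > 0$ at the start of that round. Taken together these two invariants will imply the stated bound.

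For the progress claim, I would exploit that $\FormPaths$ (Alg.~\ref{alg:form-paths}), when restricted to a single level-$\lambda$ interval, is structurally identical to one iteration of $\PPTS$ on the $m$ level-$\lambda$ pseudo-buffers with each packet's level-$\lambda$ intermediate destination $w_{P,\lambda}$ serving as its target. Consequently, Lemma~\ref{lem:bad-decrease} applies component-by-component inside every level-$\lambda$ interval: for each intermediate destination index $k$, whenever $B_{\lambda, k}(i) > 0$, the corresponding bad pseudo-buffer is to the left of $i$ in its level-$\lambda$ interval and $\FormPaths$ activates a covering subinterval, so $B_{\lambda, k}(i)$ strictly decreases by one, yielding the desired strict decrease in $B_\lambda(i)$.

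The main obstacle is the monotonicity claim, because $\FormPaths$ may forward a packet $P$ that completes its current level-$\lambda$ segment, whereupon $P$ lands in a lower-level pseudo-buffer $L_{j', k'}(a)$ at its new buffer $a$; if that pseudo-buffer is already nonempty, then naively $B_{j'}$ would grow. This is exactly what $\ActivatePreBad$ (Lines~\ref{prebad1}--\ref{prebad2}) neutralizes. By Definition~\ref{dfn:pre-bad}, each such $P$ is flagged as pre-bad, and for every pre-bad triple $(j', k', a)$ the routine activates a maximal inactive chain of $(j', k')$-pseudo-buffers starting at $a$. The combined effect is a one-hop rightward shift along the chain, so $L_{j', k'}(a)$ simultaneously emits and absorbs one packet and its occupancy does not grow; every interior pseudo-buffer in the chain sends and receives exactly one packet, and the rightmost end of the chain abuts either an already-active pseudo-buffer (which forwards onward) or the right endpoint $w_{k'}$ of the level-$j'$ interval, where a further level change may occur. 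Lemma~\ref{lem:hpts-feasible} certifies that the chain's activations are mutually compatible and compatible with the level-$\lambda$ activations already chosen. A downward induction over $j'$ from $\lambda - 1$ to $0$, matching the outer loop of $\ActivatePreBad$, propagates this argument: each lower-level chain neutralizes the pre-bad events spawned by the preceding higher-level activations, so no $B_j(i)$ grows anywhere during the round.

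Combining both invariants, suppose $B^{t+1}(i) > 0$ and choose any $\lambda^*$ with $B_{\lambda^*}^{t+1}(i) > 0$. The no-injection hypothesis together with monotonicity ensures that the positivity of $B_{\lambda^*}(i)$ persists until the phase's level-$\lambda^*$ round, at which point progress reduces $B_{\lambda^*}(i)$ by at least one. Monotonicity of all other levels across all rounds rules out compensating growth anywhere else, so $B^{(t+\ell)+}(i) \leq B^{t+1}(i) - 1$; and if $B^{t+1}(i) = 0$, monotonicity alone gives $B^{(t+\ell)+}(i) = 0$. In both cases we obtain $B^{(t+\ell)+}(i) \leq \max\set{B^{t+1}(i) - 1, 0}$, as required.
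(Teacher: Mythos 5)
Your proposal rests on two invariants, but the first one---per-level monotonicity, i.e., that $B_j(i)$ never increases for any $j$ during a round---is too strong, and it is not what the paper proves. The paper's own sketch is explicit that badness can \emph{migrate} downward: ``a bad packet (for $i$) at level $\lambda$ is replaced with a bad packet at some level $j < \lambda$.'' In that event $B_\lambda(i)$ drops by one while $B_j(i)$ \emph{rises} by one; only the total $B(i) = \sum_j B_j(i)$ is non-increasing.

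Here is where your chain argument breaks. The key step is the assertion that ``the rightmost end of the chain abuts either an already-active pseudo-buffer (which forwards onward) or the right endpoint $w_{k'}$.'' But when the chain $[a,w]$ is truncated early because $w+1 \le w_k - 1$ is already active, the active pseudo-buffer at $w+1$ is $L_{j'',k''}(w+1)$ for some $j'' > j'$ (either $j'' = \lambda$ from $\FormPaths$, or $j'' \in (j',\lambda)$ from an earlier iteration of $\ActivatePreBad$). That is a \emph{different} pseudo-buffer from $L_{j',k}(w+1)$. So $L_{j',k}(w+1)$ receives a packet from $w$ without emitting one, and if it was nonempty, $\beta_{j',k}(w+1)$ and hence $B_{j'}(i)$ for $i \ge w+1$ strictly increase. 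You conflate ``the buffer at $w+1$ forwards something'' (true, by feasibility) with ``the $(j',k)$-pseudo-buffer at $w+1$ forwards'' (false). Similarly, if $a$ itself is already active when $\ActivatePreBad$ inspects it, the routine does nothing, and $L_{j',k}(a)$ grows. In either case the compensating decrease in $B_\lambda$ keeps the total $B(i)$ from rising, but your claim that ``no $B_j(i)$ grows anywhere'' is false, and your final sentence (``monotonicity of all other levels\ldots rules out compensating growth anywhere else'') cannot be used.

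The correct argument must track $B(i)$ rather than the individual $B_j(i)$: show that $B(i)$ is non-increasing in every round, and then show that it strictly decreases at some round of the phase. The second part is where the cascading matters: if the level-$\lambda$ round fails to decrease $B(i)$, it is because a bad packet crossed into a level $j < \lambda$; since the phase processes levels in strictly decreasing order and level $0$ has nowhere lower to cascade to, the decrease must materialize by the end of the phase. Your progress claim for level $\lambda$ is essentially right (via Lemma~\ref{lem:bad-decrease} restricted to level-$\lambda$ intervals), but without the total-badness bookkeeping and the cascading argument, the lemma does not follow.
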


The proof of Lemma~\ref{lem:hpts-bad-decrease} (in 
appendix~\ref{app:hierarchical-alg}) follows from two claims. First, in 
each round of a phase, the badness of each node is non-increasing. 
Second, if $B^{t+1}(i) > 0$, then $B(i)$ strictly decreases during some 
round of the phase. Essentially we show that if $B_{\lambda}(i) > 0$ 
during the $\lambda\th$ round of a phase, then either $B(i)$ decreases, 
or a bad packet (for $i$) at level $\lambda$ is replaced with a bad 
packet at some level $j < \lambda$ (that was pre-bad before 
forwarding). Since levels are activated in decreasing order over the 
course of a phase, $B(i)$ strictly decreases in some phase.

\begin{proof}[Proof of Theorem~\ref{thm:hpts}]
  Let $A$ be a $(\rho, \sigma)$-bounded adversary such that $\rho \cdot \ell \leq 1$, and let $A_\ell$ be the $\ell$-reduction of $A$ (Definition~\ref{dfn:ell-reduction}). For each $\phi$, let $t(\phi) = (\varphi - 1) \cdot \ell + 1$ denote the beginning of the $\phi\th$ phase, and let $t(\phi+)$ denote the time at the end of phase $\phi$---that is, after the forwarding step of round $t(\phi) + \ell - 1$.
  
  We claim that for each $\phi$ and buffer $i \in \Seq{n}$, we have $B^{t(\phi)}(i) \leq \xi^{t(\phi)}(i) + 1$ and $B^{t(\phi)+}(i) \leq \max\set{B^{t(\phi)}(i) - 1, 0}$. The argument follows by induction. For the base case, both inequalities hold trivially, as no packets have been accepted. For the inductive step, suppose the equations above hold for phase $\phi - 1$. By Lemmas~\ref{lem:excess} and~\ref{lem:ell-reduction}, the number of new packets arriving whose paths contain $i$ during phase $\phi - 1$ satisfies
  \[
  N_{[t(\phi - 1)+1, t(\phi)]}(i) \leq \xi^{t(\phi)}(i) - \xi^{t(\phi-1)+1}(i) + \ell \cdot \rho.
  \]
  Therefore, by the inductive hypothesis, we have
  \[
  B^{t(\phi)}(i)
  ~\leq~ B^{t((\phi - 1)+)}(i) + N_{[t(\phi - 1)+1, t(\phi)]}(i)
  ~\leq~ \xi^{t(\phi)}(i) + \ell \cdot \rho 
  ~\leq~ \xi^{t(\phi)}(i) + 1.
  \]
  The inequality $B^{t(\phi+)}(i) \leq \max\set{B^{t(\phi)}(i) - 1, 0}$ follows from Lemma~\ref{lem:hpts-bad-decrease}.
\end{proof}

\section{Lower Bound}
\label{sec:lb}
We 
show now that the upper bound achieved by $\HPTS$ is optimal, up to a 
 factor of $O(\rho^{-2})$. 

\begin{thm}\label{thmLB}
  For any $\rho > 1/(\ell+1)$, there exists a $(\rho,1)$-bounded adversary such that for all forwarding protocols, the buffer requirement is $\Omega(\frac{(\ell+1)\rho-1}{2\ell} n^{1/\ell})$.
\end{thm}

Assume that $n$ is of the form $n = (\ell + 1) m^\ell$ for some $m, \ell \in \N$, $\ell \geq 2$. We define an injection pattern $A$ consisting of $m^\ell$ phases each of length $m$. During each phase, $A$ injects packets with $\ell + 1$ non-overlapping routes, for a total of $\rho m (\ell + 1)$ packets. The right-most injection site, denoted $F(t)$, decreases by some predetermined amount at the end of each phase, and routes for the next phase are adjusted correspondingly. The routes are constructed such that by the time \emph{any} packet reaches its destination, it must be to the right of the current value of $F(t)$. Thm.~\ref{thmLB} follows by establishing the 
following dichotomy: when $F(t)$ is decreased at the end of each phase, 
either the average load of the interval $[F(t+1), F(t)]$ is 
$\Omega(m)$, or the average load behind $F(t+1)$ increased during the 
round. Thus, we show that at the end of the execution, either some 
interval $[F(t+1), F(t)]$ had a large average load at the end of some 
phase, or the average load in $[0, F(t)]$ is sufficiently large.


We express each round number $t \geq 0$ in as its base-$m$ 
representation $t_\ell t_{\ell-1}\ncdots t_0$ where $t = 
\sum_{i=0}^\ell t_i m^i$ with each $t_i \in \Seq{m}$. For each round $t 
\geq 0$, the phase containing $t$ is called the \dft{$t_\ell \ncdots 
t_1$-phase}. Equivalently, the $t_\ell \ncdots t_1$-phase consists of 
the $m$ rounds 
starting with $\sum_{k=1}^\ell t_k m^k$.

In each $t_\ell \ncdots t_1$-phase, the packet injection sites are 
defined based on the values of $t_\ell,\ldots,t_1$ as follows. For each 
$t_\ell \ncdots t_1$-phase, and $i \in [\ell]$ we define the $i\th$ 
site 
$$
v_i(t_\ell \ncdots t_1) = \sum_{k=i}^{\ell} \left( (k+1)m^k - 
(t_k+1)k m^{k-1} \right)~.
$$

\paragraph{The Injection Pattern} During the $t_\ell \ncdots t_1$-phase, $A$ injects packets as follows:
\begin{compactitem}
\item inject $\rho m$ packets into buffer $v_{1}(t_\ell \ncdots t_1)$ with destination $n$;
\item for each $k = 2,\ldots,\ell$, inject $\rho m$ packets into buffer $v_{k}(t_\ell \ncdots t_1)$ with destination $v_{k-1}(t_\ell \ncdots t_1)$;
\item inject $\rho m$ packets into buffer 0 with destination $v_{\ell}(t_\ell \ncdots t_1)$.
\end{compactitem}
We define the \dft{type} of a packet by its injection site: for 
each $k \in \set{1,\ldots,\ell}$, packets injected at buffer 
$v_{k}(t_\ell \ncdots t_1)$ are \dft{type-$k$} packets and 
those injected at buffer 0 are \dft{type-$(\ell+1)$} packets.

For any round $t$, we denote $F(t) = v_{1}(t_\ell \ncdots t_1)$. That is, $F(t)$ is the site of all injections of type-1 packets (which coincides with destination of all injected type-2 packets) during the $t_\ell \ncdots t_1$-phase. For any packet $P$, let $P(t)$ denote the buffer in which packet $P$ is stored in round $t$. We assign each packet a status depending on the relative values of $P(t)$ and $F(t)$: we say that $P$ is \dft{fresh} in round $t$ if $P(t) \leq F(t)$. Otherwise, if $P(t) > F(t)$, then $P$ is \dft{stale}. Note that a packet is fresh when it is injected in a round $t$, since $P(t)$ is either 0 or $v_{1}(t_\ell \ncdots t_1) = F(t)$. We say that $P$ \dft{becomes stale at the end of round $t$} if $P$ is fresh in round $t$ and is stale in round $t+1$. For any $t_\ell \in \{0,\ldots,m-1\}$, let $f(t_\ell)$ denote the number of fresh packets in the network at the beginning of the $t_\ell 0 \ncdots 0$-phase.  The following lemma gives conditions under which a packet can become stale.

\begin{lem}\label{lem:staletypes}
  A packet P can only become stale at the end of a round $t$ if either 
  (1) $P(t) = F(t)$ and $P$ is forwarded in round $t$; or (2) $F(t+1) < 
  F(t)$ and $P(t+1) \in [F(t+1)+1,F(t)]$.
\end{lem}
%

As packets can become stale for different reasons, we say that a packet is \dft{$\alpha$-stale} if it became stale due to condition~1 of Lemma~\ref{lem:staletypes}, and $P$ is \dft{$\beta$-stale} if it became stale due to condition~2. The following lemma shows that no packet is delivered while it is fresh. The idea is that, in the time it takes for a packet to be delivered, $F(t)$ moves to the left of the packet's destination.

\begin{lem}\label{lem:FreshUndelivered}
  If $P$ has reached its destination in round $t'$, then $P$ is stale in round $t'$.
\end{lem}

We now give a lower bound on the number of fresh packets in the network at any given time. To do so, we give an upper bound on the number of packets that go stale in each $(t_\ell \ncdots t_1)$-phase.

\begin{lem}\label{lem:boundstale}
Over any interval of $\tau \geq 0$ rounds, the number of packets that become $\alpha$-stale is at most $\tau$.
Let $t\in\Seq{m^\ell}$. If $t$ is not the last round of the $t_\ell 
\ncdots t_1$-phase, then no packets become $\beta$-stale at the end of 
round $t$. If $t$ is the last round of the $t_\ell\ncdots t_1$-phase, 
and $k$ is the smallest integer in $\Seq{\ell}$ such 
that $t_{k+1} < m-1$, then the number of packets that become 
$\beta$-stale at the end of round $t$ is 
$L^{t+1}\left(\left[v_{1}(t_\ell \ncdots 
t_1)-\frac{m(m^{k}-1)}{m-1}~,~~v_{1}(t_\ell \ncdots t_1)\right]\right)~.
$
\end{lem}

\begin{lem}\label{lem:scenarios}
  Consider any fixed $t'_\ell \in \set{0,\ldots,m-2}$. At least one of the following scenarios occurs:
  \begin{compactenum}
  \item There exists a $k \in \set{0,\ldots,\ell-1}$ such that $t_{k+1} < m-1$ and at least $((\ell+1)\rho-1)m^{k+1}/2\ell$ packets become $\beta$-stale at the end of the $t'_{\ell}t_{\ell-1}\ncdots t_{k+1}(m-1)\ncdots (m-1)$-phase.
  \item $f(t'_\ell+1) \geq f(t'_\ell) + ((\ell+1)\rho-1)m^\ell/2$.		
  \end{compactenum}
\end{lem}

\begin{proof}[Proof of Thm.~\ref{thmLB}]
  First, suppose that the first scenario of Lemma~\ref{lem:scenarios} is satisfied, i.e.,  there exists a $t_\ell \in \set{0,\ldots,m-2}$ and an $k \in \set{0,\ldots,\ell-1}$ such that $t_{k+1} < m-1$ and at least $((\ell+1)\rho-1)m^{k+1}/2\ell$ packets become $\beta$-stale at the end of the $t_\ell\ncdots t_{k+1}(m-1)\ncdots (m-1)$-phase. Then, by Lemma~\ref{lem:boundstale}, $L^{t+1}([v_{1}(t_\ell \ncdots t_1)-m(m^{k}-1)/(m-1),v_{1}(t_\ell\ncdots t_1)]) \geq ((\ell+1)\rho-1)m^{k+1}/2k$. As $[v_{1}(t_\ell\ncdots t_1)-m(m^{k}-1)/(m-1),v_{1}(t_\ell\ncdots t_1)]$ consists of $O(m^k)$ buffers, the average load of a buffer in this range is at least $\Omega(((\ell+1)\rho-1)m/2\ell)$. Hence, there is at least one buffer with load at least $\Omega(((\ell+1)\rho-1)m/2\ell)$, giving the desired conclusion.
	
  Otherwise, suppose that the first scenario from Lemma~\ref{lem:scenarios} is never satisfied in any $t_\ell\ncdots t_1$-phase with $t_\ell \in \set{0,\ldots,m-2}$. Then, by Lemma~\ref{lem:scenarios}, the second scenario occurs for every $t_\ell \in \set{0,\ldots,m-2}$. Therefore, $f(m-1) \geq (m-1)((\ell+1)\rho-1)m^\ell/2$. By Lemma~\ref{lem:FreshUndelivered}, the total load over all buffers in the first round $t$ of the $(m-1)0\ncdots 0$-phase is bounded below by the number of fresh packets, i.e., $L^t([n]) \geq f(m-1) \geq (m-1)((\ell+1)\rho-1)m^\ell/2$. Therefore, the average load in the network in round $t$ is at least $(m-1)((\ell+1)\rho-1)m^\ell/2n = (m-1)((\ell+1)\rho-1)m^\ell/2(\ell+1)m^\ell = (m-1)((\ell+1)\rho-1)/2(\ell+1) \in \Omega(\frac{(\ell+1)\rho-1}{2\ell} n^{1/\ell})$, and the conclusion of the theorem follows.
\end{proof}

\newpage
\pagenumbering{roman}

\bibliographystyle{abbrv}
{\small \bibliography{intro}}

\newpage
\section*{APPENDIX}
\appendix

\section{Proofs for Section \ref{sec:preliminaries}}
\label{app:preliminaries}
\begin{proof}[Proof of Lemma~\ref{lem:excess}]
  The first claim follows from the definition of $(\rho, \sigma)$ bound: For all $T = [s, t]$ we have $N_T(v) \leq \rho \abs{T} + \sigma$, hence $N_T(v) - \rho \abs{T} \leq \sigma$. For the second claim, suppose $\nu$ packets are injected in round $t$ whose trajectories contain $v$, and let $T' = [s, t - 1]$ be an interval achieving the maximum in the definition of $\xi^{t-1}(v)$. Then we have
  \begin{align*}
    \xi^t(v) &\geq N_{[s, t]}(v) - \rho (t - s + 1)\\ &= N_{[s, t-1]}(v) + \nu - \rho((t - 1) - s + 1) - \rho\\ &\geq \xi^{t-1}(v) + \nu - \rho
  \end{align*}
  Rearranging this expression gives $\nu \leq \xi^t(v) - \xi^{t-1}(v) + \rho$, as desired.
\end{proof}

\begin{proof}[Proof of Lemma~\ref{lem:ell-reduction}]
  Fix an interval of time $T = [s, t]$ and buffer $v \in V$.  From the definition of $A_\ell$, we have
  \begin{align*}
    N_T(v) &= \abs{\set{(u, i_P, w_P) \in A \sucht \ell \cdot s + 1 \leq u \leq \ell \cdot t}}\\ &\leq \rho \cdot \ell (t - s + 1) + \sigma.
  \end{align*}
  The inequality holds because $A$ is $(\rho, \sigma)$-bounded.  Therefore, $A_\ell$ is $(\ell\cdot\rho, \sigma)$-bounded, as desired.
\end{proof}

\section{Additional Material for Section \ref{sec:basic-alg}}
\label{app:basic-alg}
\subsection{Proofs}
\begin{proof}[Proof of Prop.~\ref{prop:pts}]
	For each time $t$, let $B^t(w-1)$ denote the number of bad packets 
	in the network after the $t\th$ injection step. That is
	\[
	B^t(w-1) = \sum_{i \in \Seq{n}} \max \set{L^t(i) - 1, 0}.
	\]
	Similarly, let $B^{t+}(w-1)$ be the number of bad packets after the 
	$t\th$ forwarding step. We will argue by induction that for all $t$
	\begin{equation}
	\label{eqn:pts-bad}
	B^t(w-1) \leq \xi^t(w-1) + 1 \quad\text{and}\quad B^{t+}(w-1) \leq 
	\xi^t(w-1).
	\end{equation}
	Since the maximum load of any buffer can be at most $B(t) + 1$, the 
	proposition follows.
	
	For the base case, $t = 0$, (\ref{eqn:pts-bad}) trivially holds as 
	all quantities are zero. For the inductive step, 
	suppose~(\ref{eqn:pts-bad}) holds for round $t - 1$.  Let $\nu(t)$ 
	denote the number of new packets injected during the $t\th$ 
	injection step. By part~2 of Lemma~\ref{lem:excess}, $\nu(t) \leq 
	\xi_{w-1}(t) - \xi_{w-1}(t - 1) + 1$. Therefore
	\[
	B^t(w-1) \leq B^{t-1}(w-1) + \nu(t) \leq \xi^t(w-1) + (\xi^t(w-1) - 
	\xi^{t - 1}(w-1) + 1) = \xi^t(w-1) + 1.
	\]
	Thus the first expression in~(\ref{eqn:pts-bad}) holds, as desired. 
	To show the second expression also holds, it suffices to show that 
	if $B^t(w-1) \geq 1$, then $B^{t+}(w-1) = B^t(w-1) - 1$. To this 
	end, let $L^{t+}$ denote the configuration after forwarding. Let 
	$i$ be the left-most buffer for which $\abs{L^t(i)} \geq 2$. Since 
	$i$ forwards, but $i - 1$ does not, we have $L^{t+}(i) = L^t(i) - 1 
	\geq 1$. Since every non-empty $i' > i$ forwards, and no $i'$ 
	receives more than one packet from $i' - 1$, we have for all $i' > 
	i$
	\[
	\abs{L^{t+}(i')} \leq \max{1, \abs{L^t(i')}}.
	\]
	Therefore, we compute
	\begin{align*}
	B^{t+}(w-1) &= (\abs{L^{t+}(i)} - 1) + \sum_{i' > i} \max 
	\set{\abs{L^{t+}(i')} - 1, 0}\\
	&\leq L^t(i) - 2 + \sum_{i' > i} \max\set{\abs{L^t(i')} - 1, 0}\\
	&= B^t(w-1) - 1,
	\end{align*}
	as desired.
\end{proof}

To prove Prop.~\ref{prop:ppts}, we first
 show that the forwarding pattern prescribed by $\PPTS$ is 
feasible, in the sense that at most one pseudo-buffer in a buffer $i$ 
forwards in each round.
 \begin{lem}
   \label{lem:ppts-feasible}
   Consider a single step of $\PPTS$, and $I_k$ be a the (possibly 
   empty) interval 
   of 
   $k$-pseudo-buffers activated during that step. Then for any 
distinct
   $k,k'\in\Seq{d}$ we have $I_k\cap I_{k'}=\varnothing$.
 \end{lem}

\begin{proof}[Proof of Lemma~\ref{lem:ppts-feasible}]
	Denote the endpoints of an interval $I_k$ by $[a_k,b_k]$.
	Let $i(k)$ be the value of $i$ at the beginning of iteration $k$ of 
	the loop in Lines~3--9 $\PPTS$. Note that $I_k$ is empty, unless 
	the condition of Line~4 is satisfied. In the former case, we have 
	$i(k + 1) = i(k)$, and in the latter case we have $i(k+1) = i_k < 
	i(k)$ and $I_k = [i_k, i(k) - 1]$. Therefore, the sequence $i(d), 
	i(d-1), \ldots$ is weakly decreasing. Thus, for $k < k'$, if $I_k, 
	I_{k'} \neq \varnothing$, there is some $k''$ satisfying $k < k'' 
	\leq k'$ such that
	\(
	b_k = i(k'') - 1 < i_{k''} \leq i_{k'} = a_{k'}
	\)
	which gives the desired result.
\end{proof}


\begin{proof}[Proof of Lemma~\ref{lem:bad-decrease}]
	First observe that since only $k$-packets are forwarded, so only 
	$k$-loads---hence $k$-badness---changes after forwarding. Thus, the 
	second claim follows immediately from the first. 
	
	For a buffer $i$, let $\beta_k(i)$ and $\beta_k^+(i)$ denote the 
	number of $k$-bad packets stored in $i$ before and after 
	forwarding, respectively. We consider 4 cases separately:
	\begin{compactitem}
		\item $i < a_k$ or $i > b_k + 1$. In this case, $i$ neither 
		sends nor receives a packet so that $\beta_k^+(i) = \beta_k(i)$.
		\item $i = a_k$. In this case, $L_k(a_k) \geq 2$. Since $a_k$ 
		sends but does not receive a packet, we have $\beta_k^+(a_k) = 
		\beta_k(a_k) - 1$.
		\item $a_k + 1 \leq i \leq b_k$. Observe that $i$ receives at 
		most one packet, and if $L_k(i) \geq 1$, $i$ forwards one 
		packet. Thus $L_k^+(i) \leq \max\set{L_k(i), 1}$, hence 
		$\beta_k^+(i) \leq \beta_k(i)$.
		\item $i = b_k + 1$. In this case, $i$ receives at most one 
		packet, hence $\beta_k^+(i) \leq \beta_k(i) + 1$.
	\end{compactitem}
	The lemma follows from the four cases above, using the definitions
	of $B_k(i)$ and $B_k^+(i)$.
\end{proof}

\begin{proof}[Proof of Proposition~\ref{prop:ppts}]
	For each round $t > 0$ we, use $t+$ to denote the time after the 
	forwarding step of round $t$. For example, $B^{t+}(i)$ is the 
	badness 
	of $i$ \emph{after} the $t\th$ forwarding step (whereas $B^{t}(i)$ 
	is 
	the badness after the injection step, before forwarding). We will 
	show that for all rounds $t$ and buffers $i \in \Seq{n}$, we have
	\begin{equation}
	\label{eqn:bad-bound}
	B^t(i) \leq \xi^t(i) + 1 \quad\text{and}\quad B^{t+}(i) \leq 
	\xi^t(i).
	\end{equation}
	To see that the proposition follows from~(\ref{eqn:bad-bound}), we 
	compute
	\begin{align*}
	|L^t(i)| 
	~= \sum_{k \in \Seq{d}}\! |L_k^t(i)|
	\leq \sum_{k \in \Seq{d}}\! (1 + \beta_k^t(i))
	~\leq~ d +\!\! \sum_{k \in \Seq{d}}\! \beta_k^t(i)
	~\leq~ d +\!\! \sum_{k \in \Seq{d}}\! B_k^t(i)
	=~ d + B^t(i)~.
	\end{align*}
	Therefore, from~(\ref{eqn:bad-bound}), we obtain $|L^t(i)| \leq d + 
	1 + 
	\xi^t(i)$. By Claim~1 of Lemma~\ref{lem:excess}, $\xi^t(i) \leq 
	\sigma$, so that $L^t(i) \leq d + 1 + \sigma$, as desired.
	
	In order to prove~(\ref{eqn:bad-bound}), we argue by induction on 
	$t$. The proof for $t = 0$ is trivial, as no packets have yet been 
	injected. Suppose that~(\ref{eqn:bad-bound}) holds for every $i \in 
	\Seq{n}$ at time $t - 1$. Let $N_{\{t\}}(i)$ denote the number of 
	packets injected in round $t$ whose paths contain $i$. We compute
	\begin{align*}
	B^t(i) &\leq B^{(t-1)+}(i) + N_{\{t\}}(i)\\
	&\leq \xi^{t-1}(i) + N_{\{t\}}(i) &\text{(inductive hypothesis)}\\
	&\leq \xi^{t-1}(i) + (\xi^t(i) - \xi^{t-1}(i) + 1) 
	&\text{(Lemma~\ref{lem:excess})}\\
	&= \xi^t(i) + 1~.
	\end{align*}
	Therefore, the first equation of~(\ref{eqn:bad-bound}) holds.
	
	To prove the second equation in~(\ref{eqn:bad-bound}), it suffices 
	to 
	show that if $B^t(i) > 0$, then after the $t\th$ forwarding step we 
	have $B^{t+}(i) = B^t(i) - 1$. For each $k \in \Seq{d}$, let $I_k$ 
	be 
	the 
	(possibly empty) interval of $k$-pseudo-buffers activated in Line~6 
	of $\PPTS$. We have the following.
	\begin{claim}
		\label{claim}
		Let $\iota \in \Seq{n}$. If $B^t(\iota) > 0$, then 
		there exists $\kappa \in\Seq d$ such that $\iota \in I_\kappa$.
	\end{claim}
	\begin{proof}[Proof of Claim \ref{claim}]
		Let $\kappa$ be the maximum index such that $B_\kappa^t(\iota) 
		> 
		0$. Thus, for all $k > \kappa$, there is no $k$-bad packet in 
		any 
		buffer $i \leq \iota$. Therefore, in the corresponding 
		iteration $k 
		= \kappa$ of $\PPTS$, we have $i_k \leq \iota$ in Line~5. Since 
		$i$ 
		is only ever assigned to a buffer containing a bad packet 
		(Line~7 
		of $\PPTS$), during the $k\th$ iteration we must have had 
		$\iota < 
		i$ in Line~6 when $I_k = [i_k, i - 1]$ is activated. Thus, 
		$\iota 
		\in I_k$, as desired.
	\end{proof}

	Finally, suppose $i$ satisfies $B^t(i) > 0$. Applying the claim, 
	let 
	$I_k = [a_k, b_k]$ be the activated interval containing $i$. 
	Further, 
	by the definition of $i_k (= a_k)$ in Line~5 of $\PPTS$, we have 
	$\beta^t_k(a_k) > 0$. Therefore, by Lemma~\ref{lem:bad-decrease}, 
	we 
	have $B_k^{t+}(i) \leq B_k^{t}(i) - 1$. For all $k' \neq k$, 
	Lemma~\ref{lem:bad-decrease} gives $B_{k'}^{t+}(i) \leq 
	B_{k'}^{t}$. 
	Summing over all destinations gives
	\[
	B^{t+}(i) 
	~= \sum_{k' \in\Seq{d}} B_{k'}^{t+}(i) 
	~=~ B_{k}^{t+}(i) + 
	\sum_{k' \neq k} B_{k'}^{t+}(i) \leq B_k^t(i) - 1 + \sum_{k' \neq 
		k} 
	B_k^t(i) 
	~\leq~ B_k^t(i) - 1,
	\]
	as desired.
\end{proof}

\subsection{Generalization to directed trees}
\label{app:trees}
Both  $\PTS$ and $\PPTS$ algorithms generalize to directed trees 
topologies. In this section we explain how.

 A \dft{directed tree} $G = (V, E)$ is a rooted tree, where 
all edges are directed toward the root $r \in V$. We assume that the 
trajectories of all packets are directed paths in $G$.

The orientation of edges $e \in E$ toward $r$ induces a partial order 
$\prec$ on $V$:  for $u, v \in V$, we have 
$u \prec v$ if and only if $v$ is on the unique path from $u$ to $r$. 
Thus, leaves are minimal with respect to $\prec$,
and the root 
$r$ is maximal. By extension, given a configuration $L$ of $G$, $\prec$ 
induces a partial order on packets, where $P \prec P'$ if $P \in L(u)$, 
$P' \in L(v)$, and $u \prec v$.

\medskip
\paragraph{Single destination case.} 
Suppose first that all packets 
have destination $r$, the root of $G$.  Again, we say that a packet $P 
\in L^t(v)$ is bad if it stored at position $p \geq 2$, and $L^t(v)$ is 
bad if $\abs{L^t(v)} \geq 2$. The generalized $\PTS$ in this case 
selects a maximal set of ``farthest'' bad buffers from the root, and 
activates the union of paths from these buffers to the root. 
Formally, we have the following.

\begin{dfn}
	\label{dfn:low-antichain}
	Let $\calB^t \subseteq V$ be the set of bad buffers at time $t$.  
	$\calP \subseteq \calB^t$ is a \dft{low-antichain} if
	\begin{inparaenum}[(1)]
		\item 
		For all $u, v \in \calP$, $u \nprec v$ and $v \nprec u$; and
		\item For every $v \in \calB$ there exists $u \in \calP$ such 
		that $u 
		\preceq v$.
	\end{inparaenum}
\end{dfn}

Given $\calB^t$, there is a unique low-antichain
\(
\min(\calB^t) = \set{u \in \calB^t \sucht \forall v \in \calB^t, v 
\nprec u}.
\)
Our generalization of $\PTS$ activates buffers as follows: Let $\calP = 
\min(\calB^t)$. Then define $\calA$ to be
\(
\calA = \bigcup_{u \in \calP} \Path(u, r),
\)
where $\Path(u, v)$ returns the set of nodes on the unique path from 
$u$ to $v$ in $G$. More concisely, $\PTS$ activates every buffer $v \in 
V$ such that there exists a bad buffer $u$ with $u \preceq v$. We state 
the performance of this variant of $\PTS$ in 
the following.

\begin{prop}
	\label{prop:tree-pts}
	Suppose $G$ is a directed tree, $A$ is a $(\rho, \sigma)$-bounded 
	adversary with $\rho \leq 1$, and all packets $P \in A$ have 
	destination $r$. Then then maximum buffers size is at most $2 + 
	\sigma$.
\end{prop}

Proposition~\ref{prop:tree-pts} is proven analogously to 
Proposition~\ref{prop:pts}. 
Specifically, we bound the number of bad packets ``behind'' any node 
$v \in V$ by $\xi^t(v) + 1$ during any step of the algorithm. To this 
end, we generalize the notion of badness of a node to directed trees.

\begin{dfn}
  Let $v \in V$. Let $G_v = (U_v, E_v)$ denote the (directed) subtree 
of $G$ rooted at $v$, so that $U_v = \set{u \in V \sucht u \preceq 
v}$. 
For each $v \in V$ we define $\beta^t(v) = \max\set{\abs{L^t(v)}-1, 
0}$ 
to be the number of bad packets in $v$. The \dft{badness} of $v$ is
  \(
  B^t(v) = \sum_{u \in U_v} \beta^t(u).
  \)
  That is, $B^t(v)$ is the number of bad packets whose paths contain 
$v$.
\end{dfn}

\begin{lem}
  \label{lem:tree-bad-decrease}
  Let $L^t$ be a configuration and $\calA$ the set of activated nodes 
described above. For any $v \in V$, let $B^{t+}(v)$ denote the badness 
of $v$ immediately after forwarding. Then
  \(
  B^{t+}(v) \leq \max\set{B^t(v) - 1, 0}.
  \)
\end{lem}
\begin{wrapfigure}{r}{.45\textwidth}
	\begin{minipage}{.45\textwidth}\vspace{-8mm}
		\begin{algorithm}[H]\small
			\caption{$\PPTS(G, W)$ 
				(tree variant)}
			\label{alg:tree-pp2s}
			\begin{algorithmic}[1]
				\STATE $\calA \leftarrow \varnothing$
				\FOR{$k\gets d-1$ \Downto $0$}
				\STATE $\calP \leftarrow \min(\calB^t_k)$
				\STATE $\calA_k \leftarrow \paren{\bigcup_{u \in \calP} 
					\Path(u, w_k)} \setminus \calA$
				\STATE $\calA \leftarrow \calA \cup \calA_k$ 
				\ENDFOR
				\STATE every active (nonempty) pseudo-buffer in $\calA$ 
				forwards a packet
			\end{algorithmic}
		\end{algorithm}
	\end{minipage}
\end{wrapfigure}

Given Lemma~\ref{lem:tree-bad-decrease}, 
Prop.~\ref{prop:tree-pts} follows from Lemma~\ref{lem:excess} as 
in the proof of Prop.~\ref{prop:pts}.

Next, we generalize Algorithm $\PPTS$   to directed trees. 
As before let $W = \set{w_1, w_2, \ldots, w_d}$ be a set of possible 
destinations of packets. We assume that the sequence $w_1, w_2, \ldots, 
w_d$ is a topological sorting of $W$ so that $w_i \prec w_j$ implies 
that $i < j$. As in Section~\ref{sec:pp2s}, we partition in $L^t(v)$ 
into pseudo-buffers $\set{L_k^t(v) \sucht k \in \Seq{d}}$, where 
$L_k^t(v)$ stores packets destined for $w_k$. Badness of pseudo-buffers 
is defined analogously to before:
\[
\beta_k^t(v) = \max\set{\abs{L_k^t(v)} - 1, 0}, \qquad B_k^t(v) = 
\sum_{u \in U_v} \beta_k^t(u), \qquad B^t(v) = \sum_{k \in \Seq{d}} 
B_k^t(v).
\]
We denote $\calB_k^t = \set{v \in V \sucht \beta^t_k(v) \geq 1}$ to be 
the set of nodes containing $k$-bad pseudo-buffers.

The $\PPTS$ algorithm for directed trees 
is described in Algorithm~\ref{alg:tree-pp2s}. In 
order to quantify the performance of Algorithm~\ref{alg:tree-pp2s}, we 
define the \dft{destination depth} $d' = d'(G, W)$ of $G$ with 
destinations $W$ to be the maximal number of destinations on
a leaf-root path, or, equivalently, the length of the longest chain
(w.r.t.\ $\prec$) of 
destinations. 
Note that $d' \leq \min\set{d, D}$ where $D$ is the depth of $G$.


The proof of Proposition~\ref{prop:tree-ppts} is analogous to the proofs 
of Propositions~\ref{prop:ppts} and~\ref{prop:tree-pts}. 
Specifically, an argument analogous to 
Lemma~\ref{lem:tree-bad-decrease} shows that forwarding $\calA$ 
decreases the badness of each buffer. Combined with 
Lemma~\ref{lem:excess}, this implies that the badness of each node $v$ 
is bounded by $\xi^t(v) + 1 \leq \sigma + 1$. Since each $v$ can 
contain at most $d'$ non-bad packets, Proposition~\ref{prop:tree-ppts} 
follows from this bound on the badness of each node.

\section{Proofs for Section \ref{sec:hierarchical-alg}}
\label{app:hierarchical-alg}

\begin{proof}[Proof of Lemma~\ref{lem:hpts-feasible}]
  Consider the state of $\calA$ immediately after the calls $\FormPaths$ in Lines~\ref{l:ppts1}--\ref{l:ppts2}. The intervals $I_{\lambda, r}$ are disjoint, and on each interval, $I_{\lambda, r}$ the level-$\lambda$ pseudo-buffers are activated in the same pattern as a call to $\PPTS$. Thus the feasibility of $\calA$ after Line~\ref{l:ppts2} follows from the observation that the activation pattern for $\PPTS$ is feasible. $\calA$ remains feasible after each call to $\ActivatePreBad$, because only previously inactive buffers are activated in Line~\ref{l:inactive}.
\end{proof}

In order to prove Lemma~\ref{lem:hpts-bad-decrease}, we will apply the 
following result.

\begin{claim}
  \label{clm:hpts-bad-decrease}
  Let $t = (\varphi - 1) \ell$ and $s = t + \lambda$ for some $\lambda \in \Seq{\ell}$. Then for all $i \in \Seq{n}$ the following hold:
  \begin{compactenum}
  \item For all $j < \lambda$, $B^{s+}_j(i) = B^s_j(i)$.
  \item $B^{s+}_\lambda(i) \leq \max\set{B^s_\lambda(i) - 1, 0}$.
  \item $B^{s+}(i) \leq B^s(i)$.
  \end{compactenum}
\end{claim}
\begin{proof}
  Item~1 holds because $\HPTS$ can only activate buffers at levels $j' \leq \lambda$ in round $t + \lambda$ and no packet ever moves from level $j'$ to $j$ with $j > j'$. Thus states of level-$j$ pseudo buffers are unchanged after forwarding for $j > \lambda$.  For Item~2, observe that for each interval $I \in \calI_\lambda$ the paths activated in $I$ are precisely the same pseudo-buffers as are activated by $\PPTS$ (where the intermediate destinations play the role of destinations in the $\PPTS$ algorithm). Thus, Item~2 follows from Lemma~\ref{lem:bad-decrease}.
  
  For Item~3, consider a level-$j$ segment $A$ added to $\calA$ in a call to $\ActivatePreBad$, and let $I_j = [a_j, b_j] \in \calI$ be the level-$j$ interval containing $A$. Thus $A = [a_j, i_j]$ for some $i_j < b_j$. Let $P$ be the unique pre-bad packet for $a_j$, and let $k$ be the index such that $w_{P, j}$ is $I_j$'s $k\th$ destination. Observe that one of the following events occurred:
  \begin{compactenum}[(a)]
  \item $i_j = w_{P, j} - 1$ and activating $A$ creates a new pre-bad packet $P'$ in buffer $L_{j, k}^{s}(i_j)$.
  \item $i_j = w_{P, j} - 1$ but activating $A$ does not create another pre-bad packet.
  \item $i_j < w_{P, j} - 1$ and every $i \in I_j$ with $i > i_j$ is 
  active at at level $\lambda$.\footnote{The second conclusion holds 
  because intervals activated in $\FormPaths$ always terminate at 
  intermediate destinations for level $\lambda$.  Since $j < \lambda$, 
  $I_j$ is contained in the interval between consecutive intermediate 
  level $\lambda$ intermediate destinations, if some $i \in I_j$ is 
  active at level $\lambda$, all buffers $i' \in I_j$ with $i' \geq i$ 
  are active at level $\lambda$.}
  \end{compactenum}
  From items~(a)--(c) above, each pre-bad $P$ created in the call to $\FormPaths$ induces a (possibly empty) sequence $A_1, A_2, \ldots$ of consecutive segments at decreasing levels, where each segment $A_{i + 1}$ is activated because event~(a) occurred in the activation of $A_i$. Conversely, every pre-bad packet $P'$ at a level $j < \lambda$ is induced from a unique pre-bad packet $P$ at level $\lambda$. To prove part~3 of the claim, let $i \in I_j$ for some $j < \lambda$. We consider three (exhaustive) cases separately.
  \begin{compactdesc}
  \item[Case 1:] $i \in A_\alpha$ where $A_\alpha$ is the unique activated containing $i$.
  \item[Case 2:] $i$ is active at level $\lambda$, and an active segment $A_\alpha \subseteq I_j$ terminates as in case~(c) above (necessarily with $i_j < i$).
  \item[Case 3:] $i$ is not active.
  \end{compactdesc}
  In Case~1, suppose the segment $A_\alpha$ is at level $j$. First observe that for $j' \neq j$, we have $B_{j'}^{s+}(i) = B_{j'}^s(i)$ because no buffer $i' \leq i$ at level $j'$ containing packets with intermediate destination $w \geq i$ is active. As for the level-$j$ badness of $i$, we have $B_{j}^{s+}(i) \leq B_{j}^{s}(i)$. To see this, let $I_j = [a_j, b_j]$ be the level-$j$ interval containing $i$, so that $a_j$ receives a pre-bad packet $P$ during forwarding. Let $k$ be the index such that $w_{P, j}$ is $I_j$'s $k\th$ intermediate destination. If $\abs{L_{j, k}^{s}(a_j)} \geq 2$, then forwarding $A_\alpha$ strictly decreases $B_{j, k}(i)$ by Lemma~\ref{lem:bad-decrease}, while $B_{j}(i)$ increases by one as a result of $P$ being forwarded. Similarly, if $\abs{L_{j, k}^{s}(a_j)} = 1$, forwarding $A_\alpha$ does not increase $B_{j, k}(i)$. In this case, $P$ being forwarded to $a_j$ does not increase $B_{j}(i)$, because $L_{j, k}^s(a_j)$'s sole occupant is forwarded.
  
  In Case~2, for all $j' \neq \lambda, j$, we have $B_{j'}^{s+}(i) = B_{j'}^{s}(i)$. $B_{\lambda}^{s+}(i) \leq B_{\lambda}^s(i) - 1$ by Lemma~\ref{lem:bad-decrease}. We argue that $B_{j}^{s+}(i) \leq B_{j}^s(i) + 1$. To see this first consider the case where $\abs{L_{j, k}(a_j)} \geq 2$. In this case, $\beta_{j, k}^{s+}(a_j) = \beta_{j, k}^s(a_j)$ because $a_j$ forwards one bad packet, and receives one bad packet. For $i' \in [a_j + 1, i_j]$, $\beta_{j, k}(i')$ does not increase because each non-empty $L_{j, k}(i')$ forwards. Finally $\beta_{j, k}(i_j + 1)$ increases by at most one as a result of $i_j + 1$ receiving a level-$j$ packet, but sending none. Since no other level-$j$ pseudo-buffers forward, in this case, $B_j^{s+}(i) \leq B_j^s(i) + 1$. If $\abs{L_{j, k}(i)} = 1$, then forwarding $P$ does not create a bad packet, but forwarding $A_\alpha$ could create a single new bad packet (in buffer $i_j + 1$), so the same conclusion holds.
  
  Finally for Case~3, no packet that is pre-bad for any level $j$ has a path crossing $i$ at level $j$ (otherwise Cases~1 or~2 would have occurred). Therefore, forwarding pre-bad packets will not increase $B_j(i)$ for any $j < \lambda$. Further $B_j^{s+}(i) = B_j^s(i)$ for all $j \geq \lambda$, as no level-$j$ buffers are active that forward packets whose paths cross $i$ at level $j$.
\end{proof}

\begin{proof}[Proof of Lemma~\ref{lem:hpts-bad-decrease}]
  Since we assume no injections are made in rounds $t + 2, t + 3, \ldots, t + \ell$, we have $B^{s+1}(i) = B^{s+}(i)$ for all $s = t+1, t+2, \ldots, t + \ell$. Applying Item~3 of Claim~\ref{clm:hpts-bad-decrease} iteratively for $t + 1, t + 2, \ldots, t + \ell$, we find that $B^{(t + \ell)+}(i) \leq B^t(i)$.  Thus, it suffices to show that if $B^{t+1}(i) \geq 1$, there is some round $s = t + \lambda$ for $\lambda \in \Seq{\ell}$ such that $B^{s+}(i) < B^s(i)$. By Claim~\ref{clm:hpts-bad-decrease}, in each round $s$ such that $B^{s}(i) > 0$, either $B^{s+}(i) \leq B^{s}(i) - 1$, or \[ B^{s+}_{\lambda + 1}(i) + B^{s+}_{\lambda + 2}(i) + \cdots + B^{s+}_{\ell}(i) = B^{s}_{\lambda + 1}(i) + B^{s}_{\lambda + 2}(i) + \cdots + B^{s}_{\ell}(i) + 1.  \] That is, either the badness of $i$ decreases, or the level-$j$ badness of $i$ for some $j < \lambda$ increases. Consider the largest $s = t + \lambda$ such that $B^s_\lambda(i) > 0$. Then the level-$j$ badness of $i$ did not increase for any $j < \lambda$, hence we have $B^{s+}(i) = B^s(i)$, as desired.
\end{proof}

\section{Proofs for Section \ref{sec:lb}}
\label{app:lb}

\begin{proof}[Proof of Lemma~\ref{lem:staletypes}]
		As $F(t)$ is non-increasing, there are two possible cases:
		\begin{itemize}
			\item Suppose that $F(t+1) = F(t)$. If $P$ becomes stale at the end of round $t$, it means that $P$ is fresh in round $t$ and stale in round $t+1$. In particular, $P(t) \leq F(t) = F(t+1) < P(t+1)$, which implies that $P$ is forwarded in round $t$. As $P$ can only be forwarded once in round $t$, it follows that $P(t+1) = F(t) + 1$, so $P(t) = F(t)$.
			\item Suppose that $F(t+1) < F(t)$. As $P$ is stale in round $t+1$, we have $P(t+1) > F(t+1)$, so $P(t+1) \geq F(t+1)+1$. 
			
			Next, as $P$ is fresh in round $t$, we have $P(t) \leq F(t)$. Assuming condition (1) doesn't hold, there are two cases to consider. In the case where $P(t) < F(t)$, since $P$ can only be forwarded once in round $t$, it follows that $P(t+1) \leq F(t)$. In the case where $P(t) = F(t)$ and $P$ isn't forwarded in round $t$, then $P(t+1)=F(t)$.
		\end{itemize}
\end{proof}

\begin{proof}[Proof of Lemma~\ref{lem:FreshUndelivered}]
	We first prove the following claim that relates the values of $F(t)$ and $F(t')$ depending on the value of $t' - t$.
		\begin{claim}\label{claim:FMoves}
			Consider any round $t$ and any round $t' > t$. If there exists $k \in \set{1,\ldots,\ell}$ such that $t'_k > t_k$ and $t'_i = t_i$ for all $i \in \set{k+1,\ldots,\ell}$, then $F(t') < v_k(t_\ell \ncdots t_1)$.
		\end{claim}
		
		By definition,
		\begin{align*}
		F(t') ~ =~ v_1(t'_\ell \ncdots t'_1)
		& = v_1(t_\ell \ncdots t_{k+1}t'_k \ncdots t'_1)\\
		& = \sum_{j=1}^{k} \left[ (j+1)m^j - (t'_j+1)jm^{j-1} \right] + 
		\sum_{j=k+1}^{\ell} \left[ (j+1)m^j - (t_j+1)jm^{j-1} \right]~.
		\end{align*}
		By definition, $v_k(t_\ell \ncdots t_1) = \sum_{j=k}^{\ell} \left[ (j+1)m^j - (t_j+1)jm^{j-1} \right]$. Therefore,
		\begin{align*}
		F(t') - v_k(t_\ell \ncdots t_1) & = \sum_{j=1}^{k} \left[ (j+1)m^j - (t'_j+1)jm^{j-1} \right] - \left[(k+1)m^k - (t_k+1)km^{k-1}\right]\\
		& = \sum_{j=1}^{k-1} \left[ (j+1)m^j - (t'_j+1)jm^{j-1} \right] - \left[(t'_k+1)km^{k-1} - (t_k+1)km^{k-1}\right]\\
		& < \sum_{j=1}^{k-1} \left[ (j+1)m^j - (t'_j+1)jm^{j-1} \right] - \left[km^{k-1}\right]\\
		& \leq \sum_{j=1}^{k-1} \left[ (j+1)m^j - jm^{j-1} \right] - \left[km^{k-1}\right]\\
		& = \left[ km^{k-1} - 1\right] - \left[km^{k-1}\right]\\
		& = -1~.
		\end{align*}
		It follows that $F(t') < v_k(t_\ell \ncdots t_1)$, which concludes the proof of the claim.

	To prove Lemma~\ref{lem:FreshUndelivered}, suppose a packet $P$ is injected in round $t$. There are several cases to consider, based on the packet type.
	\begin{compactdesc}
		\item[$P$ is a type-1 packet:] In this case $P$ is injected 
		into buffer $v_1(t_\ell \ncdots t_1)$ with destination $n$. 
		Since $t' \geq t$, we have 
		\begin{align*}
		F(t')~ \leq~ F(t)
			  & = v_1(t_\ell \ncdots t_1) \\
			  & = \sum_{j=1}^{\ell} \left[ (j+1)m^j - (t_j+1)jm^{j-1} \right]\\
			  & \leq \sum_{j=1}^{\ell} \left[ (j+1)m^j - jm^{j-1} \right]\\
			  & = (\ell+1)m^\ell - 1
			  ~=~ n-1 
			  ~<~ P(t')~.
	    \end{align*}
		\item[$P$ is a type-$(\ell+1)$ packet:] In this case $P$ is 
		injected into 
		buffer 0 with destination $v_\ell(t_\ell \ncdots t_1)$. By the 
		definition of $v_\ell(t_\ell \ncdots t_1)$, and since $t_\ell 
		\in \set{0,\ldots,m-1}$, packet $P$'s destination is 
		$(\ell+1)m^\ell - (t_\ell+1)\ell m^{\ell-1} \geq (\ell+1)m^\ell 
		- \ell m^{\ell} = m^{\ell}$. As packet $P$ moves at most one 
		hop per round, at least $m^\ell$ rounds elapse before $P$ 
		reaches its destination, i.e., $t' - t \geq m^\ell$. This means 
		that, in the base-$m$ representations of $t$ and $t'$, we have 
		$t'_\ell > t_\ell$, so by Claim~\ref{claim:FMoves}, $F(t') < 
		v_k(t_\ell \ncdots t_1) = P(t')$. 
		\item[$P$ is a type-$k$ packet with $k \in 
		\set{2,\ldots,\ell}$:] In this case $P$ is injected into buffer 
		$v_k(t_\ell\ncdots t_1)$ with destination 
		$v_{k-1}(t_\ell\ncdots t_1)$. The distance between these two 
		buffers is
		\begin{align*}
		 \sum_{j=k-1}^{\ell} \left[ (j+1)m^j - (t_j+1)jm^{j-1} \right] 
		 - \sum_{j=k}^{\ell} &\left[ (j+1)m^j - (t_j+1)jm^{j-1} 
		 \right]\\
		= &  ~km^{k-1} - (t_{k-1}+1)(k-1)m^{k-2}\\
		\geq & ~km^{k-1} - m(k-1)m^{k-2}\\
		= & ~m^{k-1}~.
		\end{align*}
	\end{compactdesc} 
	 As packet $P$ moves at most one hop per round, at least $m^{k-1}$ rounds elapse before $P$ reaches its destination, i.e., $t' - t \geq m^{k-1}$. This means that, in the base-$m$ representations of $t$ and $t'$, there exists a $k' \geq k-1$ such that $t'_{k'} > t_{k'}$ and $t'_i = t_i$ for all $i \in \set{k'+1,\ldots,\ell}$. So by Claim~\ref{claim:FMoves}, $F(t') < v_{k'}(t_\ell \ncdots t_1) \leq v_{k-1}(t_\ell \ncdots t_1) = P(t')$.
	 
	 In all cases, we showed that $P(t') > F(t')$, which means that $P$ is stale in round $t'$, as desired.
\end{proof}
\begin{proof}[Proof of Lemma~\ref{lem:boundstale}]
		As the buffer $F(t)$ can forward at most one packet per round, the number of packets that can go $\alpha$-stale in any interval of $\tau \geq 0$ rounds is at most $\tau$.
		
	  	Next, note that if $t$ is not the last round of the $t_\ell \ncdots t_1$-phase, then rounds $t$ and $t+1$ belong to the $t_\ell \ncdots t_1$-phase, which implies that $F(t+1) = v_1(t_\ell\ncdots t_1) = F(t)$. Therefore, condition 2 of Lemma~\ref{lem:staletypes} cannot be satisfied, so no packets become $\beta$-stale at the end of round $t$.
	  	
	  	Next, suppose that $t$ is the last round of the $t_\ell\ncdots t_1$-phase, and let $k$ be the smallest integer in $\set{0,\ldots,\ell-1}$ such that $t_{k+1} < m-1$. Namely, round $t$ belongs to the $t_\ell \ncdots t_{k+1}(m-1)\ncdots (m-1)$-phase, and round $t+1$ belongs to the $t_\ell \ncdots (t_{k+1}+1)0\ncdots 0$-phase. 
	  	
	  	By definition, the value of $F(t)$ is
	  	\begin{align*}
	  	v_1(t_\ell \ncdots t_{k+1}(m-1)\ncdots (m-1)) & = \sum_{j=1}^{k} \left[ (j+1)m^j - mj m^{j-1} \right]  +  \sum_{j=k+1}^{\ell} \left[ (j+1)m^j - (t_j+1)j m^{j-1} \right]\\
	  	& = \sum_{j=1}^{k} m^j   +  \sum_{j=k+1}^{\ell} \left[ (j+1)m^j 
	  	- (t_j+1)j m^{j-1} \right]~.
	  	\end{align*}
	  	Further, the value of $F(t+1)$ is
	  	\begin{align*}
	  	v_1(t_\ell \ncdots (t_{k+1}+1)0\ncdots 0) & = \sum_{j=1}^{k} \left[ (j+1)m^j - j m^{j-1} \right]  +  \sum_{j=k+1}^{\ell} \left[ (j+1)m^j - (t_j+1)j m^{j-1} \right] - (k+1)m^k\\
	  	& = ((k+1)m^k - 1)  +  \sum_{j=k+1}^{\ell} \left[ (j+1)m^j - (t_j+1)j m^{j-1} \right] - (k+1)m^k\\
	  	& = -1+\sum_{j=k+1}^{\ell} \left[ (j+1)m^j - (t_j+1)j m^{j-1} 
	  	\right]~.
	  	\end{align*}
	  	Thus, $F(t+1)+1 = F(t) - \sum_{j=1}^{k} m^j = F(t) - m(m^{k}-1)/(m-1)$. By Lemma~\ref{lem:staletypes}, all packets that become $\beta$-stale at the end of round $t$ are located in buffer interval $[F(t+1)+1,F(t)]$ in round $t+1$, which means that the number of such packets is $L^{t+1}([v_{1}(t_\ell \ncdots t_1)-m(m^{k}-1)/(m-1),v_{1}(t_\ell \ncdots t_1)])$. 
\end{proof}
\begin{proof}[Proof of Lemma~\ref{lem:scenarios}]
		We assume that scenario 1 does not occur, and show that this implies the inequality in scenario 2.
		
		For any fixed $t'_\ell \in \set{0,\ldots,m-2}$ and any fixed $k \in \set{0,\ldots,\ell-1}$, the number of distinct $t'_\ell t_{\ell-1}\ncdots t_{k+1}(m-1)\ncdots (m-1)$-phases is at most $m^{\ell-1-k}$, as $t_{\ell-1},\ldots,t_{k+1}$ can take on at most $m$ values each. So, if we assume that scenario 1 does not occur, i.e., at most $((\ell+1)\rho-1)m^{k+1}/2\ell$ packets become $\beta$-stale at the end of each such phase, then the total number packets that become $\beta$-stale over all $t'_\ell t_{\ell-1}\ncdots t_1$-phases is at most
		
$$
\sum_{k=0}^{\ell-1} (m^{\ell-1-k})\frac{((\ell+1)\rho-1)m^{k+1}}{2\ell}
~=~ \sum_{k=0}^{\ell-1} \frac{((\ell+1)\rho-1)m^{\ell}}{2\ell}
~=~ \frac{((\ell+1)\rho-1)m^{\ell}}{2}~.
$$
		
		Next, for fixed $t'_\ell$, there are $m^{\ell-1}$ $t'_\ell 
		t_{\ell-1}\ncdots t_1$-phases, and each consists of $m$ rounds, 
		so the number of rounds that elapse over all such phases is 
		$m^\ell$. By Lemma~\ref{lem:boundstale}, the total number of 
		packets that become $\alpha$-stale over all $t'_\ell 
		t_{\ell-1}\ncdots t_1$-phases is at most $m^\ell$. Therefore, 
		the total number of packets that become stale over all $t'_\ell 
		t_{\ell-1}\ncdots t_1$-phases is at most $m^\ell + 
		\frac{((\ell+1)\rho-1)m^{\ell}}{2}$. However, our injection 
		pattern injects $(\ell+1)\rho m$ fresh packets in each $t'_\ell 
		t_{\ell-1}\ncdots t_1$-phase, so the total number of fresh 
		packets injected over all $t'_\ell t_{\ell-1}\ncdots 
		t_1$-phases is $m^{\ell-1}((\ell+1)\rho m) = (\ell+1)\rho 
		m^\ell = m^\ell + ((\ell+1)\rho-1)m^{\ell}$. So, of all of the 
		packets injected from the start of the $t'_\ell 0\ncdots 
		0$-phase until the end of the $t'_\ell (m-1)\ncdots 
		(m-1)$-phase, at least $\frac{1}{2}((\ell+1)\rho-1)m^{\ell}$ of 
		them do not go stale, 
		which proves that $f(t'_\ell+1) \geq f(t'_\ell) + 
		\frac{1}{2}((\ell+1)\rho-1)m^{\ell}$. 
\end{proof}

\end{document}